\newtheorem{theorem}{Proposition}
\newtheorem{lemma}[theorem]{Lemma}
\begin{document}


\title{Space Shuffle: A Scalable, Flexible, and High-Bandwidth Data Center Network}
\author{
Ye Yu and  Chen Qian\\
Department of Computer Science,
University of Kentucky \\
ye.yu@uky.edu, qian@cs.uky.edu\\
}
\maketitle

\begin{abstract}
Data center applications require the network to be scalable and  bandwidth-rich.
Current data center network architectures often use rigid topologies to increase network bandwidth. A major limitation is that they can hardly support incremental network growth. Recent studies propose to use random interconnects to provide growth flexibility. However,  routing on a random topology suffers from control and data plane scalability problems, because routing decisions require global information and forwarding state cannot be aggregated. In this paper, we design a novel flexible data center network architecture, Space Shuffle (S2), which applies greedy routing on multiple ring spaces to achieve high-throughput, scalability, and flexibility. The proposed greedy routing protocol of S2 effectively exploits the path diversity of densely connected topologies and enables key-based routing.  Extensive experimental studies show that S2
provides high bisectional bandwidth and throughput, near-optimal routing path lengths, extremely small forwarding state,  fairness among concurrent data flows, and resiliency to network failures.

\end{abstract}


\newtheorem{mydef}{theorem}

\section{Introduction}  
\label{sec:intro}

Data center networks, being an important computing and communication component for cloud services and big data processing, require high inter-server communication bandwidth and scalability \cite{DCN-survey}. Network topology and the corresponding routing protocol are determinate factors of application performance in a data center network. 
Recent work has been investigating new topologies and routing protocols with a goal of improving network performance in the following aspects.

 1) \textbf{High-bandwidth:} Many applications of current data center networks are data-intensive and require
substantial intra-network communication, such as MapReduce \cite{mapreduce}, Hadoop \cite{hadoop}, and Dryad \cite{Dryad}. Data center networks should have densely connected topologies which provide high bisection bandwidth and multiple parallel paths between any pair of servers.  Routing protocols that can effectively exploit the network bandwidth and path diversity are essential.

 2) \textbf{Flexibility:} A data center network may change after its deployment. According to a very recent survey \cite{DigitalRealty}, 93\% US data center operators and 88\% European data center operators will definitely or probably expand their data centers in 2013 or 2014. Therefore a data center network should support \emph{incremental growth} of network size, i.e., adding servers and network bandwidth incrementally to the data center network without destroying the current topology or replacing the current switches.

 3) \textbf{Scalability:} Routing and forwarding in a data center network should rely on small forwarding state of switches and be scalable to large networks. Forwarding table scalability is highly desired in large enterprise and data center networks, because they use expensive and power-hungry memory to achieve increasingly fast line speed \cite{BUFFALO} \cite{SWDC} \cite{ROME}. If forwarding state is small and does not increase with the network size, we can use relatively inexpensive switches to construct large data centers and do not need switch memory upgrade when the network grows.


Unfortunately, existing data center network architectures  \cite{fattree} \cite{vl2} \cite{Portland} \cite{DCell} \cite{Symbiotic} \cite{SWDC} \cite{Jellyfish} focus on one or two of the above properties and pay little attention to the others. For example, the widely used multi-rooted tree topologies \cite{fattree} \cite{Portland} provide rich bandwidth and efficient routing, but their ``firm'' structures cannot deal with incremental growth of network size. The recently proposed Jellyfish network \cite{Jellyfish} uses random interconnect to support incremental growth and near-optimal bandwidth \cite{Godfrey14}. However, Jellyfish has to use inefficient $k$-shortest path routing whose forwarding state is big and cannot be aggregated. CamCube \cite{Symbiotic}  and Small World Data Centers (SWDC) \cite{SWDC} propose to use greedy routing for forwarding state scalability and efficient key-value services. Their greedy routing protocols do not produce shortest paths and can hardly be extended to perform multi-path routing that can fully utilize network bandwidth.

\begin{table*}[t]
\centering
\small 
\caption{
Desired properties of data center network architectures. 
 $N$: \# switches, $M$: \# links.
 Question mark means such property is not discussed in
the paper. 
}
\vspace{-1ex}
\label{table:properties}
\begin{tabular}{c||ccccc}
\hline
 & FatTree  \cite{fattree} & CamCube \cite{Symbiotic} & SWDC \cite{SWDC} & Jellyfish \cite{Jellyfish} & S2  \\
\hline
Network bandwidth & Benchmark & No Comparison & $>$ Camcube & $>$ FatTree and SWDC & $\approx$ Jellyfish \\
Multi-path routing & \checkmark & ? & ? & \checkmark & \checkmark \\
Incremental growth & \ding{53} & ? & ? & \checkmark & \checkmark\\
Forwarding state per switch & $O(\log N)$ & constant & constant & $O(kN\log N)$ & constant \\
Key-based routing & \ding{53} & \checkmark & \checkmark & \ding{53} & \checkmark\\
Switch heterogeneity & \ding{53} & \ding{53} & \ding{53} & \checkmark & \checkmark\\
 \hline
\end{tabular}
\vspace{-5ex}
\end{table*}

Designing a data center network that satisfies all three requirements seems to be challenging. Flexibility requires irregularity of network topologies, whereas high-throughput routing protocols on irregular topologies, such as $k$-shortest path, are hard to scale.
In this paper, we present a new data center network architecture, named Space Shuffle (S2), including \emph{a scalable greedy routing protocol that achieves high-throughput and near-optimal path lengths on flexible and bandwidth-rich networks built by random interconnection.}

S2 networks are constructed by interconnecting an arbitrary number of commodity ToR switches.
Switches maintain coordinates in multiple \emph{virtual spaces}.
%
We also design a novel greedy routing protocol called \emph{greediest routing} that guarantees to find multiple paths to any destination on an S2 topology.
Unlike existing greedy routing protocols \cite{GDV,MDT-ToN}, which use only one single space, greediest routing makes decisions by considering switches coordinates in multiple spaces.
The routing path lengths are close to shortest path lengths.
In addition, coordinates in multiple spaces enable efficient and high-throughput multi-path routing of S2.
S2 also effectively supports key-based routing, which has demonstrated to fit many current data center applications using key-value stores \cite{Symbiotic}.
%

%

Table \ref{table:properties} compares S2 and four other recent data center networks qualitatively in seven desired properties, namely high bandwidth, multi-path routing, flexibility for incremental growth, small forwarding state, 
 key-based routing, and support of switch  heterogeneity.
S2 achieves almost all desired properties while every other design has a few disadvantages.

We  use extensive simulation results to demonstrate S2's performance in different dimensions, including routing path length, bisection bandwidth, throughput of single-path and multi-path routing, fairness among flows, forwarding table size, and resiliency to network failures.
Compared to two recently proposed data center networks \cite{SWDC} \cite{Jellyfish}, S2 provides significant advantages in some performance dimensions and is equally good in other dimensions.

The rest of this paper is organized as follows.
We present related work in Section \ref{sec:related}.
We describe the S2 topology and its construction in Section \ref{sec:topo}.
In Section \ref{sec:routing}, we present the routing protocols and design considerations.
We evaluate the performance of S2 in Section \ref{sec:evaluation}.
We discuss a number of practical issues in Section \ref{sec:discussion} and finally conclude this work in Section \ref{sec:conclusion}.

\section{Related Work}
\label{sec:related}

Recent studies have proposed a number of new network topologies to improve data center performance such as bisection bandwidth, flexibility, and failure resilience.
Al-Fares \emph{et.al.} \cite{fattree} propose a multi-rooted tree structure called FatTree
that provides multiple equal paths between any pair of servers and can be built with commodity switches.
VL2 \cite{vl2} is a data center network that uses flat addresses and provide layer-2 semantics.
Its topology is a Clos network which is also a multi-rooted tree \cite{clos}.
Some data center network designs use direct server-to-server connection in regular topologies to achieve high bisection bandwidth, including DCell \cite{DCell}, BCube \cite{BCube}, CamCube \cite{Symbiotic}, and Small-World data centers \cite{SWDC}.
However, none of these designs have considered the requirement of incremental growth of data centers.

A number of solutions have been proposed to provide network flexibility and support incremental growth.
Scafida \cite{Scafida} uses randomness to build an asymmetric data center network that can be scaled in
smaller increments.
In LEGUP \cite{LEGUP}, free ports are preserved for future expansion of Clos networks.
%
%
REWRITE \cite{REWRITE} is a framework that uses local search to find a network topology that maximizes bisection bandwidth whiling minimizing latency with a give cost budget.
None of these three \cite{Scafida} \cite{LEGUP} \cite{REWRITE} have explicit routing design to utilize the network bandwidth of the irregular topologies.
Jellyfish \cite{Jellyfish} is a recently proposed data center network architecture that applies random connections to allow arbitrary network size and incremental growth.
Jellyfish can be built with any number of switches and servers and can incorporate additional devices by slightly changing the current network.
Using $k$-shortest path routing, Jellyfish achieves higher network throughput compared to FatTree \cite{fattree} and supports more servers than a FatTree using the same number of switches.
However, to support $k$-shortest path routing on a random interconnect, forwarding state in Jellyfish switches is big and cannot be aggregated.
Using the MPLS implementation of $k$-shortest path as suggested in \cite{Jellyfish}, the expected number of forwarding entries per switch is proportional to $kN\log N$, where $N$ is the number of switches in the network.
In addition, $k$-shortest path algorithm is extremely time consuming. Its complexity is $O(kN(M+NlogN))$ for a single source ($M$ is the number of links) \cite{routingmesh}. This may result in slow convergence under network dynamics.
Hence, Jellyfish may suffer from both \emph{data plane} and \emph{control plane scalability} problems.
PAST \cite{PAST} provides another multi-path solution for Jellyfish, but the throughput of Jellyfish may be degraded.
A very recent study \cite{Godfrey14} discusses the near-optimal-throughput topology design for both homogeneous and heterogeneous networks.
It does not provide routing protocols which can achieve the throughput in practice.

As a scalable solution, greedy routing has been applied to enterprise and data center networks  \cite{Symbiotic} \cite{SWDC} \cite{ROME}.
CamCube \cite{Symbiotic} employs greedy routing on a 3D torus topology.
It provides an API for applications to implement their own routing protocols to satisfy specific requirements, called symbiotic routing.
The network topologies of Small-World data centers (SWDCs) are built with directly connected servers in three types: ring, 2D Torus, and 3D Hex Torus.
ROME \cite{ROME} is a network architecture to allow greedy routing on arbitrary network topologies and provide layer-2 semantics. 
For all three network architectures \cite{Symbiotic} \cite{SWDC} \cite{ROME}, multi-path routing is not explicitly provided.

SWDC, Jellyfish, and S2 all employ randomness to build physical topologies. However, they demonstrate substantially different performance because of their different logical organizations and routing protocols. SWDC applies scalable greedy routing on regularly assigned coordinates in a single space and supports key-based routing. Jellyfish provides higher throughput using $k$-shortest path routing, but it sacrifices forwarding table scalability.
S2 gets the best of both worlds: it uses greedy routing on randomly assigned coordinates in multiple spaces to achieve both high-throughput routing and small forwarding state.
%
%

\section{Space Shuffle Data Center Topology}
\label{sec:topo}
The Space Shuffle (S2) topology is a
interconnect of commodity top-of-rack (ToR) switches.
In S2, all switches play a equal role and execute a same protocol.
We assume there is no server multi-homing, i.e., a server only connects with one switch.

\subsection{Virtual coordinates and spaces}

Each switch $s$ is assigned a set of \emph{virtual coordinates} represented by a $L$-dimensional vector $\langle x_{1}, x_{2}, ..., x_{L} \rangle$, where each element $x_{i}$ is a randomly generated real number  $0\leq x_{i}<1$.
%
There are $L$ virtual ring spaces.
In the $i$-th space, a switch is \emph{virtually} placed on a ring based on the value of its $i$-th coordinate $x_i$.
Coordinates in each space are circular, and 0 and 1 are superposed.
Coordinates are distinct in a single space.
In each space, a switch is physically connected with the two adjacent switches on its left and right sides.
Two physically connected switches are called neighbors.
For a network built with $w$-port switches\footnote{We now assume homogenous switches. We will discuss switch heterogeneity in Section \ref{sec:heterogeneity}.}, it is required that $2L < w$.
Each switch has at most $2L$ ports to connect other switches, called inter-switch ports.
The rest ports can be used to connect servers.
A neighbor of a switch $s$ may happen to be adjacent to $s$ in multiple spaces. In such a case, $s$ needs less than $2L$ ports to connect adjacent switches in all $L$ spaces.
Switches with free inter-switch ports can then be connected randomly.

Figure \ref{fig:TopoExample} shows a S2 network  with 9 switches and 18 hosts in two spaces.
As shown in  Figure \ref{fig:connect}, each switch is connected with two hosts and four other switches.
Figure \ref{fig:coord} shows coordinates of each switch in the two spaces.
Figures \ref{fig:space1} and \ref{fig:space2} are the two virtual spaces, where coordinate 0 is at top and coordinates increase clockwisely.
As an example, switch $B$ is connected to switches $A$, $C$, $F$, and $G$, because $A$ and $C$ are adjacent to $B$ in space 1 and $F$ and $G$ are adjacent to $B$ in space 2.
$A$ only uses three ports to connects adjacent switches $I$, $B$, and $H$, because it is adjacent to $I$ in both two spaces.
$A$ and $E$ are connected as they both have free inter-switch ports.


\begin{figure}[t]
        \centering
		\begin{subfigure}[b]{0.4\linewidth}
                \includegraphics[width=\textwidth]{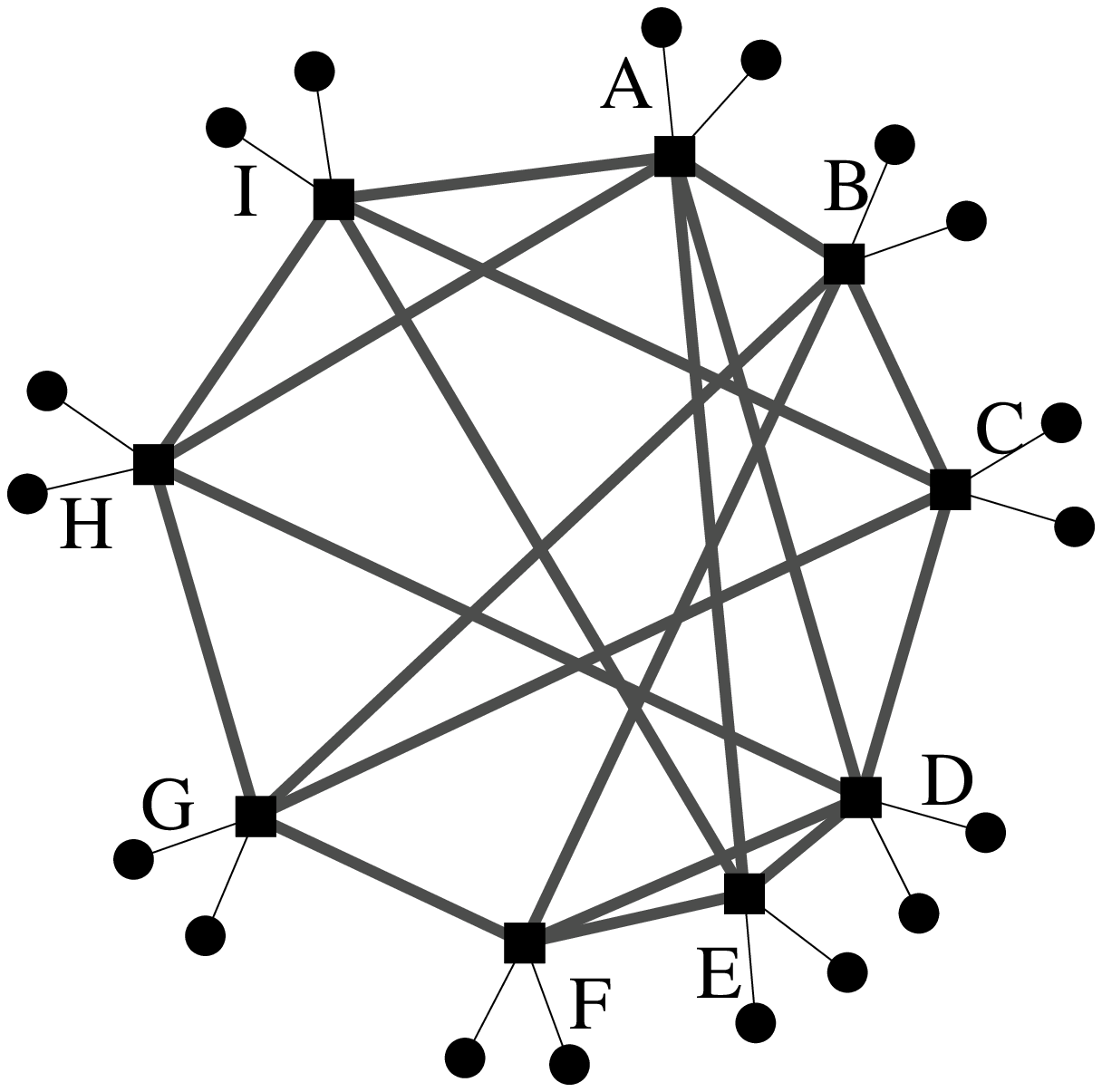}
                \caption{Space Shuffle topology}
                \label{fig:connect}
        \end{subfigure}
        \begin{subfigure}[b]{0.4\linewidth}
            \centering
			\small
            \begin{tabular}{ccc}
            $ID$ & $x_{1}$ & $x_{2}$ \\ \hline
            $A$ & 0.05 &  0.17 \\
            $B$ & 0.13 &  0.62 \\
            $C$ & 0.23 &  0.91 \\
            $D$ & 0.36& 0.42 \\
            $E$ & 0.42 &  0.53 \\
            $F$ & 0.51 &  0.58 \\
            $G$ & 0.63 &  0.73 \\
            $H$ & 0.78 &  0.26 \\
            $I$ & 0.91 &  0.97 \\ \hline
            \end{tabular}
            \caption{Switch coordinates}
            \label{fig:coord}
        \end{subfigure}
		\\
        \begin{subfigure}[b]{0.4\linewidth}
                \includegraphics[width=\linewidth]{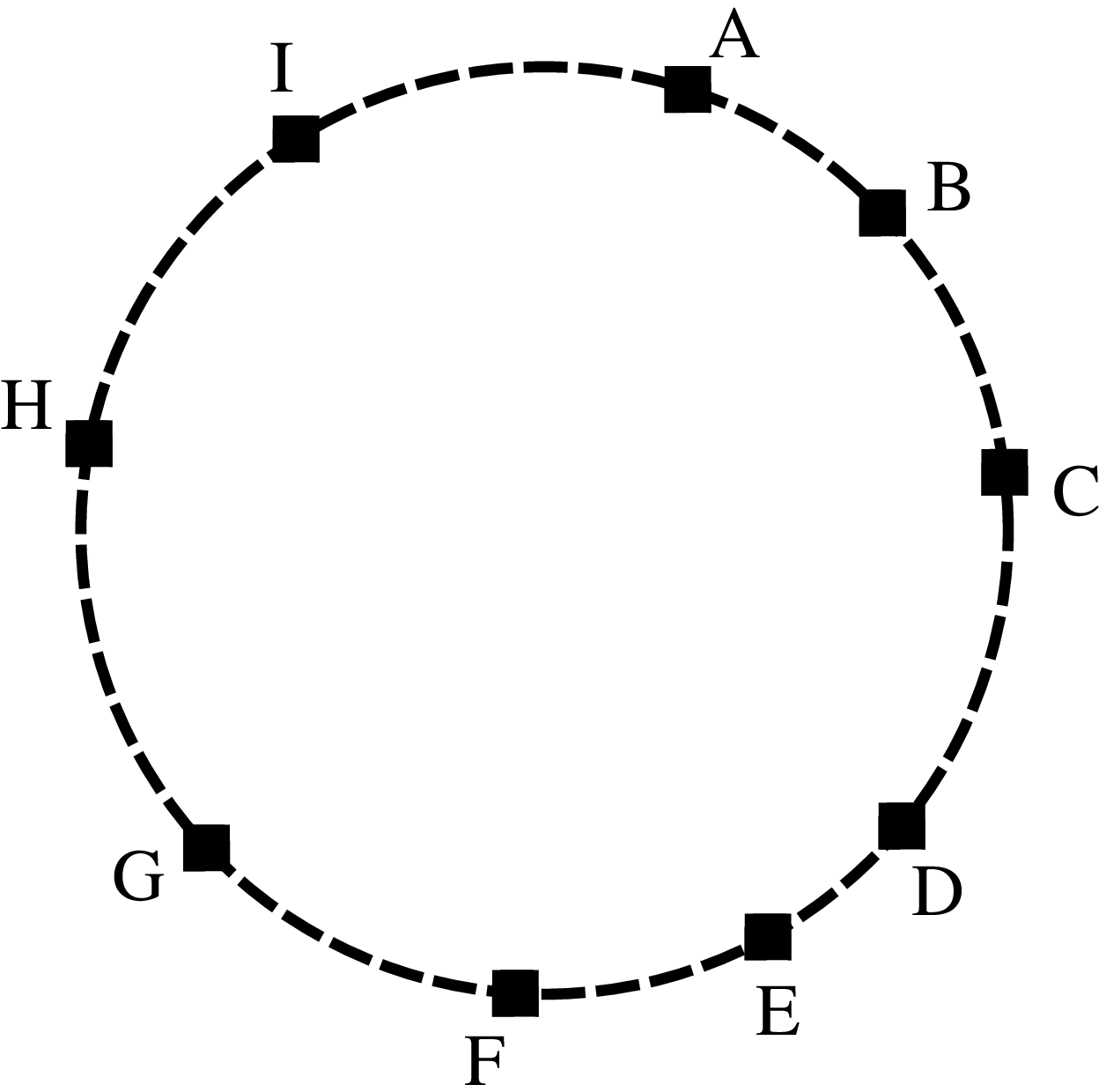}
                \vspace{-3ex}
                \caption{Space 1}
                \label{fig:space1}
        \end{subfigure}%
        \begin{subfigure}[b]{0.4\linewidth}
                \includegraphics[width=\linewidth]{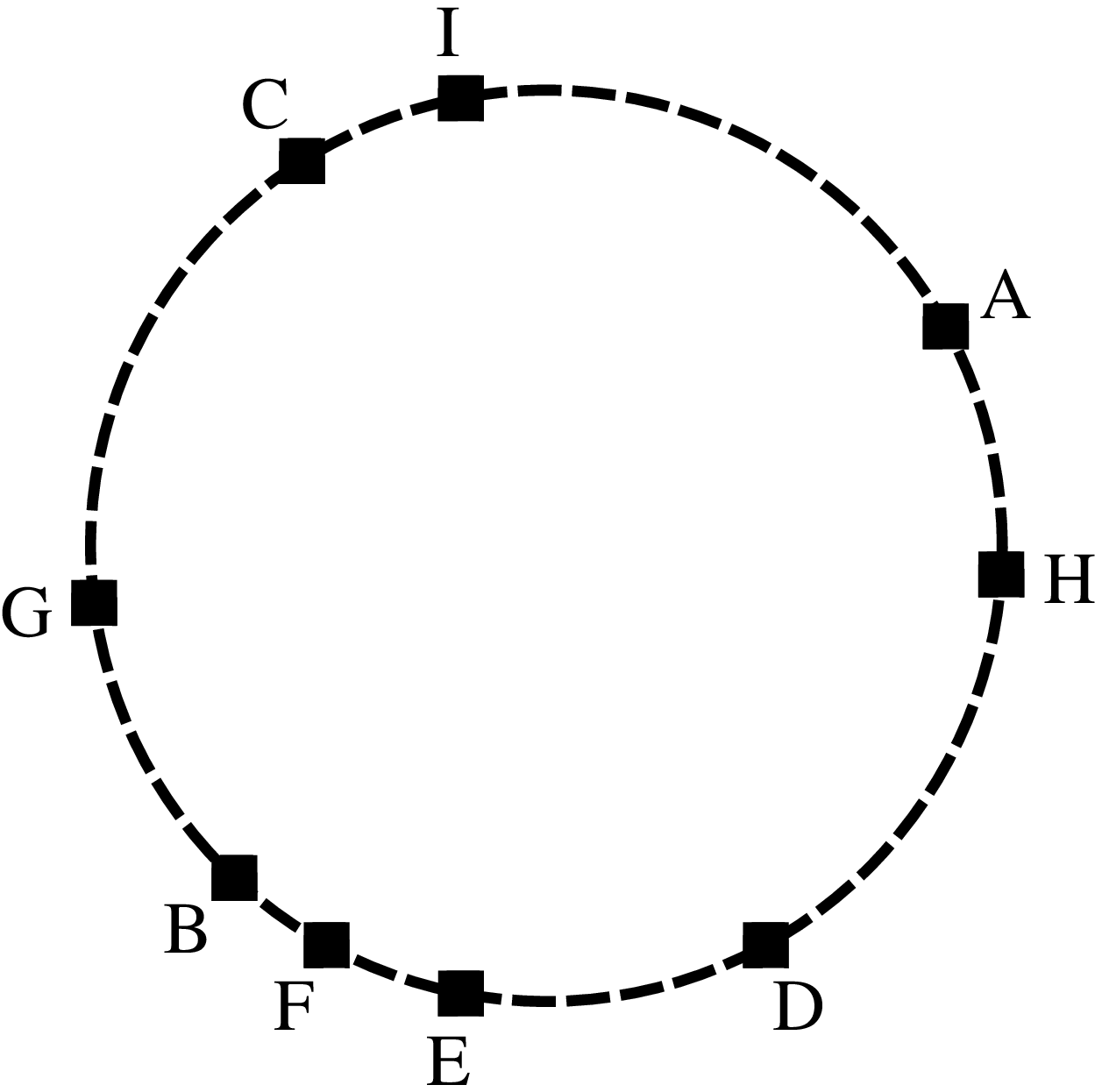}
                \vspace{-3ex}
                \caption{Space 2}
                \label{fig:space2}
        \end{subfigure}
        \vspace{-1ex}
        \caption{Example S2 network with 9 switches and 18 servers in 2 spaces. Squares are switches and circles are servers.}\label{fig:TopoExample}
        \vspace{-5ex}
\end{figure}

\subsection{Topology construction}

As a flexible data center network, S2 can be constructed by either deploy-as-a-whole or incremental deployment.

For the deploy-as-a-whole construction of a network with $N$ switches and $H$ servers, 
each switch is assigned  $\lfloor \frac{H}{N} \rfloor$ or $\lfloor \frac{H}{N} \rfloor + 1$
servers. The number of spaces $L$ is then set to $\lfloor \frac{1}{2} (w - \lceil \frac{H}{N} \rceil)\rfloor$.
Switch positions are randomly assigned in each space.
For each space, cables are placed to connect every pair of adjacent switches.
If there are still more than one switches with free ports, we randomly select switch pairs and connect each pair.
%
%
%
We will discuss more cabling issues in Section \ref{sec:wire}.



S2 can easily support any expansion of the data center network using the incremental deployment algorithm.
Suppose we decide to expand the data center network by $m$ servers.
A switch can connect $w-2L$ servers, and we can determine the number of new switches is $\lceil m/(w-2L) \rceil$.
For each new switch $s$, we assign it a set of random coordinates.
We find $s$'s two adjacent nodes $u$ and $v$ in each space, which is currently connected.
Then, the operator removes the cable between $u$ and $v$ and let $s$ connect to both of them.
New switches and servers can be added serially by iterative execution of this procedure.


Similar to Jellyfish  \cite{Jellyfish}, S2 can be constructed with any number of servers and switches.
For incremental network expansion, only a few cables need to be removed and a few new cables are placed. Hence there is very little network update cost.

At this point, coordinate generation is purely random.
We will discuss the impact of coordinate randomness to the proposed routing protocol and introduce a method to guarantee that any two coordinates are different in Section \ref{sec:balance}.

\subsection{Similar to random regular graphs}
We wonder whether S2 topologies are close to random regular graphs (RRGs), which, as discussed in \cite{Jellyfish} \cite{wang2014similarity} and \cite{Godfrey14}, provide near-optimal bisection bandwidth and lower average shortest path length compared to other existing data center topologies built with identical equipments.
By definition, an $r$-regular graph is a graph where all vertices have an identical degree $r$.
RRGs with degree $r$ are sampled uniformly from the space of all $r$-regular graphs.

Since constructing an RRG is a very complex problem, Jellyfish \cite{Jellyfish} uses the ``sufficiently uniform random graphs'' that empirically have the desired properties of RRGs.
Therefore, we compare S2 with Jellyfish in the average shortest path length.
Table \ref{table:SPcompare} shows the empirical results of shortest path lengths between servers of S2 and Jellyfish.
We show the average, 10\% percentile, and 90\% percentile values for all pairs of servers on 10 different topologies of S2 or Jellyfish.
A network has $N$ switches, each of which has 12 inter-switch ports.
We find that the shortest path lengths of S2 are very close to those of Jellyfish, and they have identical 10\% and 90\% percentile values.
We also find that the  switch-to-switch path lengths of both S2 and Jellyfish follow logarithmic distribution $\log N$, consistent to the property of RRGs \cite{RRGSP}.
As discussed by \cite{Jellyfish}, networks with lower shortest path lengths provide higher bandwidth.
We demonstrate that S2 has almost same shortest path lengths to those of sufficiently uniform random graphs used by Jellyfish.
We will further demonstrate its bisection bandwidth in Section \ref{sec:evaluation}.

Essentially, SWDC, Jellyfish, and S2 use similar random physical interconnects to approximate RRGs\footnote{We also notice a recent work using RRGs for P2P streaming \cite{RRGP2P}, whose routing protocol cannot be used in data center networks.}.
However, their logical organizations and routing protocols are substantially different, which result in different network performance such as throughput and forwarding table size.
\begin{table}[t]
\centering
\small
\caption{Shortest path lengths: S2 vs. Jellyfish}
\label{table:SPcompare}
\vspace{-1ex}
\begin{tabular}{c||c|c|c||c|c|c}
\hline
 & \multicolumn{3}{c||} {\text{SpaceShuffle} }& \multicolumn{3}{c} {\text{JellyFish}}\\
 N  & average & 10\% & 90\% &average & 10\% & 90\% \\
\hline
 100 & 3.80111 & 3 & 4 & 3.80396 & 3 & 4 \\
 200 & 4.00241 & 3 & 5 & 4.00500 & 3 & 5 \\
 400 & 4.29735 & 4 & 5 & 4.29644 & 4 & 5 \\
 800 & 4.57358 & 4 & 5 & 4.57306 & 4 & 5 \\
 1200 & 4.69733 & 4 & 5 & 4.69670 & 4 & 5 \\
 \hline
\end{tabular}
\vspace{-4ex}
\end{table}
\section{Routing Protocols}
\label{sec:routing}
A desired routing protocol in data center networks should have several important features that satisfy application requirements.
First, a routing protocol should guarantee to find a loop-free path to delivery a packet from any source to any destination, i.e.,  \emph{delivery guarantee} and \emph{loop-freedom}.
Second, the routing and forwarding should be scalable to a large size of servers and switches. Third, it should utilize the bandwidth and exploit path diversity of the network topology.

A straightforward way is to use shortest path based routing such as OSPF on S2. However, shortest path routing has a few potential scalability problems.
First, in the data plane, each switch needs to maintain a forwarding table whose size is proportional to the network size.
 The cost of storing the forwarding table in fast memory such as TCAM and SRAM can be high \cite{SWDC}. As the increasing line speeds require the use of faster, expensive, and power-consuming memory, there is a strong motivation to design routing protocol that only uses a small size of memory and does not require memory upgrades when the network size increases \cite{BUFFALO}.
Second, running link-state protocols introduces non-trivial bandwidth cost to the control plane.




\subsection{Greediest Routing}

Since the coordinates of a switch can be considered geographic locations in $L$ different spaces, we design a new greedy geographic routing protocol for S2, called \emph{greediest routing}. 

\textbf{Routable address:} The routable address of a server $h$, namely  $\vec{X}$, is the virtual coordinates of the switch connected to $h$ (also called $h$'s access switch).
Since most current applications uses IP addresses to identify destinations, an address resolution method is needed to obtain the S2 routable address of a packet, as ARP, a central directory, or a DHT \cite{SEATTLE, ROME}. The address resolution function can be deployed on end switches for in-network traffic and on gateway switches for incoming traffic.
In a packet, the destination server $h$  is identified by a tuple $\langle \vec{X}, ID_h \rangle$, where $\vec{X}$ is $h$'s  routable address (virtual coordinates of the access switch) and $ID_h$ is $h$'s identifier such as its MAC or IP address.
The packet is first delivered to the switch $s$ that has the virtual coordinates $\vec{X}$, and then $s$  forwards the packet to $h$ based on $ID_h$.

\textbf{MCD: }We use the \emph{circular distance} to define the distance between two coordinates in a same space.
The circular distance for two coordinates $x$ and $y$ ($0 \leq x,y < 1$) is
\setlength{\abovedisplayskip}{0.1ex}
\setlength{\belowdisplayskip}{0.1ex}
\[CD(x,y) = \min \{|x-y|,1-|x-y|\}\].
In addition, we introduce the \emph{minimum circular distance} (MCD) for routing design.
For two switches $A$ and $B$ with virtual coordinates $\vec{X} = \langle x_1, x_2, ..., x_L \rangle$ and $\vec{Y}= \langle y_{1}, y_{2}, ..., y_{L} \rangle$ respectively, 
the MCD of $A$ and $B$, $MCD(\vec{X}, \vec{Y})$, is the minimum circular distance measured in the $L$ spaces.
Formally,
\[MCD(\vec{X}, \vec{Y}) = \min_{1 \leq i \leq L} CD(x_{i},y_{i}). \]

\textbf{Forwarding decision:}
The greediest routing protocol works as follows.
 When a switch $s$ receives a packet whose destination is $\langle \vec{X}_{t}, ID \rangle$, it first checks whether $\vec{X}_{t}$  is its own coordinates.
 If so, $s$ forwards the packet to the server whose identifier is $ID$. Otherwise, $s$ selects a neighbor $v$ such that $v$ minimizes $MCD(\vec{X}_v,\vec{X}_t)$ to the destination, among all neighbors.
The pseudocode of \textsc{greedist routing on switch} $s$ is presented by Algorithm 1 in the appendix.

\begin{table}[t]
\centering
\small
\caption{MCDs to $C$ from $H$ and its neighbors in Figure \ref{fig:TopoExample}}
\vspace{-1ex}
\label{table:greediest_example}
\begin{tabular}{c||ccc}
\hline
 & Cir dist in Space 1 & Cir dist in Space 2 & Min cir dist \\
\hline
$H$ & 0.45 & 0.35 & 0.35 \\
$A$ & 0.18 & 0.26 & 0.18 \\
$D$ & 0.13 & 0.49 & 0.13 \\
$G$ & 0.40 & 0.18 & 0.18 \\
$I$ & 0.32 & 0.06 & 0.06 \\
 \hline
\end{tabular}
\vspace{-5ex}
\end{table}
For example, in the network shown in Figure \ref{fig:TopoExample},
switch $H$ receives a packet
whose destination host is connected to switch $C$, hence the destination coordinates are $\vec{X}_C$. $H$ has four neighbors $A$, $D$, $I$, and $G$. After computing the MCD from each neighbor to the destination $C$ as listed in Table \ref{table:greediest_example}, $H$ concludes that $I$ has the shortest minimal circular distance to $C$ and then forwards the packet to $I$.

We name our protocol as ``greediest routing'' because it selects a neighbor that has a smallest MCD  to the destination among all neighbors in all spaces.  Existing greedy routing protocols only try to minimize distance to the destination in a single space (Euclidean, or in other kinds).

Greediest routing on S2 topologies provides delivery guarantee and loop-freedom. To prove it, we first introduce two lemmas.
\begin{lemma}
\label{lemma1}
In a space and given a coordinate $x$, if a switch $s$ is not the switch that has the shortest circular distance to $x$ in the space, then $s$ must have an adjacent switch $s'$ such that $CD(x, x_{s'})< CD(x, x_s)$.
\end{lemma}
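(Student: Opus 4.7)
The plan is to reduce the claim to a one-dimensional geometric fact on the ring in the given single space. Fix the space and the target coordinate $x$. The function $y \mapsto CD(x, y)$ on the ring is zero at $y = x$, strictly increases as $y$ moves away from $x$ in either direction, and attains its maximum $1/2$ at the antipode $x + 1/2 \pmod{1}$. Since in S2 every switch is physically connected to its two ring-neighbors in each space, the lemma reduces to showing that if $s$ is not the globally closest switch to $x$ on this ring, then at least one of $s$'s two ring-neighbors has strictly smaller circular distance to $x$.

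I would pick $s'$ to be the ring-neighbor of $s$ reached by walking from $s$ along the ring in the direction of the shorter arc toward $x$, and then split into two cases based on whether this walk reaches $s'$ before or after passing through $x$. In case (a), the arc from $s$ to $s'$ in this direction is shorter than the shorter arc from $s$ to $x$; then $s'$ lies strictly between $s$ and $x$ on this arc, so $CD(x, x_{s'})$ equals $CD(x, x_s)$ minus the positive arc length from $s$ to $s'$, giving the desired strict inequality. In case (b), $x$ lies strictly between $s$ and $s'$ on the arc, meaning $s$ and $s'$ are the two switches flanking $x$ on the ring. Every other switch then lies entirely on the complementary arc, and a short wrap-around check shows that any such switch is at least as far from $x$ as whichever of $s,s'$ is closer. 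Hence the globally closest switch to $x$ belongs to $\{s, s'\}$, and the hypothesis that $s$ is not the closest forces $s'$ to be closer, again yielding the strict inequality.

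The main obstacle is case (b), where the naive ``keep walking toward $x$'' monotonicity argument does not directly apply because a single step along the ring overshoots $x$. The argument there relies on the flanking observation together with the global hypothesis that $s$ is not closest, and requires a careful wrap-around check to rule out a distant switch being closer to $x$ via the long way around the ring; this reduces to the elementary fact that the complementary arc has length $1$ minus the arc from $s$ to $s'$, which bounds every switch on it away from $x$ by simple comparisons against $CD(x, x_s)$ and $CD(x, x_{s'})$. Ties in which two switches are exactly equidistant from $x$ occur only on a measure-zero set under the random coordinate assignment described in Section \ref{sec:topo} and can be resolved by a consistent tie-breaking rule so that the inequality remains strict in all non-degenerate cases.
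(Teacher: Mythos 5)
Your proof is correct and takes essentially the same route as the paper's: walk along the ring from $s$ toward $x$, and either the next switch in that direction lies strictly between $x_s$ and $x$ and is therefore strictly closer (your case (a), the paper's steps (3.1)/(4.1)), or $s$ and that switch flank $x$, so the globally closest switch --- which is necessarily one of the two flanking switches --- must be that neighbor (your case (b), the paper's steps (3.2)/(4.2), where the closest switch $p$ is shown to be adjacent to $s$). The tie/degeneracy caveat you raise is real but applies equally to the paper's own argument, which likewise assumes a unique closest switch to $x$.
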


\begin{lemma}
\label{lemma2}
Suppose switch $s$ receives a packet whose destination switch is $t$ and the coordinates are  $\vec{X}_{t}$, $s \neq t$. Let $v$ be the switch that has the smallest MCD to $\vec{X}_{t}$ among all neighbors of $s$. Then $MCD(\vec{X}_v, \vec{X}_{t}) < MCD(\vec{X}_s, \vec{X}_{t})$.
\end{lemma}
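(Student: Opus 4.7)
The plan is to reduce Lemma \ref{lemma2} directly to Lemma \ref{lemma1} by zooming in on the single space that witnesses the MCD from $s$ to $t$. Specifically, I would let $i^\ast \in \{1, \ldots, L\}$ be an index achieving the minimum in the definition of $MCD(\vec{X}_s, \vec{X}_t)$, so that
\[
MCD(\vec{X}_s, \vec{X}_t) \;=\; CD(x_{s,i^\ast}, x_{t,i^\ast}).
\]
The whole argument then takes place in space $i^\ast$ and uses the other $L-1$ spaces only to upper-bound MCDs.

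First I would observe that, because $s \neq t$, the switch $s$ cannot be the switch of minimum circular distance to $x_{t,i^\ast}$ in space $i^\ast$: the destination switch $t$ itself sits at circular distance $0$ from $x_{t,i^\ast}$ in that space. Lemma \ref{lemma1} then yields an adjacent switch $s'$ of $s$ in space $i^\ast$ such that $CD(x_{s',i^\ast}, x_{t,i^\ast}) < CD(x_{s,i^\ast}, x_{t,i^\ast})$. Here I would invoke the topology construction rule that any two switches adjacent on some ring are physically connected, so $s'$ is a neighbor of $s$ in the S2 network.

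Next I would bound $MCD(\vec{X}_{s'}, \vec{X}_t)$ from above by the circular distance contributed by space $i^\ast$, obtaining
\[
MCD(\vec{X}_{s'}, \vec{X}_t) \;\le\; CD(x_{s',i^\ast}, x_{t,i^\ast}) \;<\; CD(x_{s,i^\ast}, x_{t,i^\ast}) \;=\; MCD(\vec{X}_s, \vec{X}_t).
\]
Since $v$ is by definition the neighbor of $s$ minimizing MCD to $\vec{X}_t$, we have $MCD(\vec{X}_v, \vec{X}_t) \le MCD(\vec{X}_{s'}, \vec{X}_t)$, which chains into the desired strict inequality.

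The main obstacle I anticipate is not algebraic but definitional: the proof silently uses (i) that a ring neighbor in any single space is automatically a physical neighbor of $s$, and (ii) that $\vec{X}_t$ is occupied by exactly one switch (namely $t$), which the paper's assumption of distinct coordinates guarantees. If coordinates were allowed to collide, the ``$s \neq t$ implies $s$ is not the closest'' step would fail and we would need to argue via the tie-breaking rule; I would flag this briefly but otherwise rely on the distinctness assumption already stated in the topology section.
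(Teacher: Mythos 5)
Your proposal is correct and follows essentially the same route as the paper's proof: identify the space witnessing $MCD(\vec{X}_s,\vec{X}_t)$, note that $s\neq t$ and distinct coordinates make $s$ non-closest there, apply Lemma \ref{lemma1} to get a strictly closer ring-adjacent (hence physical) neighbor, bound its MCD by that single-space circular distance, and conclude by the minimality of $v$. Nothing further is needed.
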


Lemma \ref{lemma2} states that if switch $s$ is not the destination switch, it must find a neighbor $v$ whose MCD is smaller than $s$'s to the destination. Similar to other greedy routing protocols, when we have such ``progressive and distance-reducing'' property, we can establish the proof for delivery guarantee and loop-freedom.
\begin{theorem}
\label{thm:delivery_greediest}
Greediest routing  finds a loop-free path of a finite number of hops to a given destination on an S2 topology.
\end{theorem}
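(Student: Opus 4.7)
The plan is to derive the statement as an almost mechanical corollary of Lemma~\ref{lemma2}: the MCD to the destination coordinates serves as a strictly decreasing potential along every routing step. Let $t$ be the destination switch with coordinates $\vec{X}_t$, and let $p_0 = s, p_1, p_2, \ldots$ be the sequence of switches visited by a packet under greediest routing starting from the source switch $s$. As long as $p_i \neq t$, Algorithm~1 selects the neighbor $p_{i+1}$ that minimizes $MCD(\vec{X}_{p_{i+1}}, \vec{X}_t)$ among the neighbors of $p_i$, and Lemma~\ref{lemma2} guarantees that this minimizer strictly beats $p_i$ itself, so a forwarding step is always well defined.

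I would then define a potential $\Phi_i := MCD(\vec{X}_{p_i}, \vec{X}_t)$, which by Lemma~\ref{lemma2} satisfies $\Phi_{i+1} < \Phi_i$ whenever $p_i \neq t$. A strictly decreasing sequence of real numbers cannot repeat a value, so the switches $p_0, p_1, \ldots$ must be pairwise distinct until the packet reaches $t$; this is loop-freedom. Since the network contains only a finite number $N$ of switches, the sequence can have length at most $N$, which gives the finiteness of the routing path. The only stopping condition in Algorithm~1 is $p_i = t$, and Lemma~\ref{lemma2} excludes getting stuck anywhere else by always producing a strictly closer neighbor, so the path must terminate exactly at $t$, giving the delivery guarantee.

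The main obstacle in the overall argument has already been absorbed into Lemma~\ref{lemma2}; at the level of the proposition itself there is only the small subtlety that the algorithm's stopping test ``a switch's coordinates equal $\vec{X}_t$'' must be read as ``the switch equals $t$''. This follows from the coordinate-uniqueness promise that will be enforced in Section~\ref{sec:balance}, which guarantees that no two switches share a coordinate vector, so that $\Phi_i = 0$ forces $p_i = t$. Once Lemmas~\ref{lemma1} and~\ref{lemma2} are in place, no further machinery is required, and the bound $N-1$ on path length drops out for free from the distinctness of the $p_i$.
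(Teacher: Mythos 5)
Your proof is correct and follows essentially the same route as the paper's: both treat the MCD to the destination as a strictly decreasing quantity via Lemma~\ref{lemma2}, and conclude loop-freedom and delivery from the finiteness of the switch set. Your explicit note that coordinate uniqueness is needed to equate ``coordinates equal $\vec{X}_t$'' with ``the switch is $t$'' is a small refinement that the paper leaves implicit (it is used inside the proof of Lemma~\ref{lemma2}), but it does not change the argument.
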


The proofs of the above lemmas and proposition are presented in the appendix.

Like other greedy routing protocols \cite{SWDC, ROME}, greediest routing in S2 is highly scalable and easy to implement. Each switch only needs a small routing table that stores the coordinates of all neighbors. The forwarding decision can be made by a fixed, small number of numerical distance computation and comparisons. More importantly, the routing table size only depends on the number of ports and does not increase when the network grows. In the control plane, decisions are made locally without link-state broadcast in the network wide.

\subsubsection{Reduce routing path length}
An obvious downside of greedy routing is that it does not guarantee shortest routing path.
Non-optimal routing paths incur longer server-to-server latency.
More importantly, flows routed by longer paths will be transmitted on more links, and thus consumes more network bandwidth \cite{Jellyfish}.
To resolve this problem, we allow each switch in S2 stores the coordinates of 2-hop neighbors.
To forward a packet, a switch first determines the  switch $v$ that has the shortest MCD to the destination, among all 1-hop and 2-hop neighbors.
If $v$ is an 1-hop neighbor, the packet is forwarded to $v$.
Otherwise, the packet is forwarded to an one hop neighbor connected to $v$.
Delivery guarantee and loop-freedom still holds.
According to our empirical results, considering 2-hop neighbors can significantly reduce routing path lengths.

As an example, in a 250 10-port switch network, the distribution of switch-to-switch routing path lengths of $k$-hop neighbor storage is shown in Figure \ref{fig:compare_hops_pathlen}, where the optimal values are the shortest path lengths.
Storing 2-hop neighbors significantly reduces the routing path lengths compared with storing 1-hop neighbor.
The average routing path length  of greediest routing with only 1-hop neighbors is 5.749. Including 2-hop neighbors, the value is decreased to  5.199, which is very close to 4.874, the average shortest path length.
However, including 3-hop neighbors does not improve the routing path much compared with using 2-hop neighbors.
Therefore, we decide to store 2-hop neighbors for S2 routing. Although storing 2-hop neighbors requires more state, the number of 2-hop neighbors are bounded by $d^2$, where $d$ is the inter-switch port number, and  this number is much lower than $d^2$ in practice.
As forwarding state is independent of the network size, S2 routing is still highly scalable.

\begin{figure}[tb]
\centering
\includegraphics[width=0.85\linewidth] {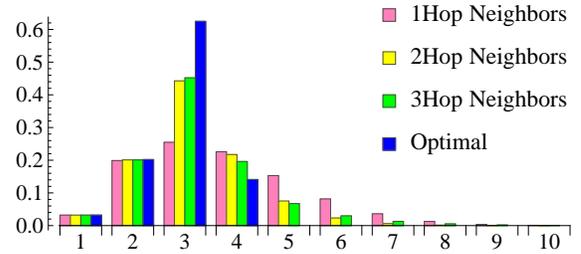}
\vspace{-1ex}
\caption{\small Distribution of routing path lengths using $k$-hop neighbors}
\label{fig:compare_hops_pathlen}
\vspace{-2ex}
\end{figure}

\subsubsection{Impact of the space number}
Proposition \ref{thm:delivery_greediest} holds for any $L \geq 1$.
Therefore, greediest routing can use the coordinates only in the first $d$ spaces, $d<L$, and apply the MCD in the first $d$ spaces ($d$-MCD) as the greedy routing metric.
In an extreme case where $d =1$, greediest routing degenerates to greedy routing on one single ring using the circular distance as the metric.
For $d<L$, the links connecting adjacent switches in the $d, d+1, ..., L$-th spaces are still included in routing decision.
They serve as random links that can reduce routing path length and improve bandwidth.


For all $d$, $1\leq d \leq L$, greedy routing using $d$-MCD provides delivery guarantee and loop-freedom. We evaluate how the value of $d$ affects routing performance by showing the number of spaces $d$ versus the average routing path length of a typical network topology in Figure \ref{fig:PathLengthDiffNumSpace}. The two error bars represent the 10th and 90th percentile values. Only switch-to-switch paths are computed. The optimal results shown in the figure are shortest path lengths, which in average is 2.498.
We find that routing path lengths significantly reduce when the 2nd and 3rd spaces are included in greedy routing. Using more than 4 spaces, the average length is about 2.5 to 2.6, which is close to the optimal value.
Hence greediest routing in  S2 always use as many spaces as switch port capacity allows. Commodity switches have  more than enough ports to support 5 or more spaces.

\begin{figure}[tb]
\centering
\includegraphics[width=0.85\linewidth]{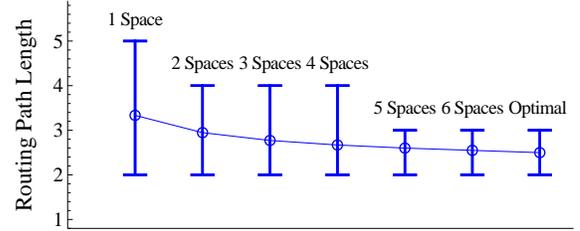}
\caption{\small Routing path length using different numbers of spaces}
\vspace{-2ex}
\label{fig:PathLengthDiffNumSpace}
\vspace{-2.5ex}
\end{figure}

\subsection{Multi-path routing}
Multi-path routing is essential for delivering full bandwidth among servers in a densely connected topology and performing traffic engineering.
 Previous greedy routing protocols can hardly apply existing multi-path algorithms such as equal-cost multi-path (ECMP) \cite{ECMP} and $k$-shortest paths \cite{Jellyfish}, because each switch lacks of global knowledge of the network topology. Consider a potential multi-path method for greedy routing in a single Euclidean space.
 For different flows to a same destination, the source switch intentionally forwards  them to different neighbors by making not-so-greedy decisions. This approach may result longer routing paths.
 In addition, these paths will share a large proportion of overlapped links because all flows are sent to a same direction in the Euclidean space.
 Overlapped links can easily be congested. Therefore, designing  multi-path greedy routing in a single space is challenging.

Greediest routing on S2 supports multi-path routing well due to path diversity across different spaces.
Acoording to Lemma \ref{lemma2}, if a routing protocol reduces the MCD to the destination at every hop, it will eventually find a loop-free path to the destination.
Based on this property, we design a multi-path routing protocol presented as follows.
When a switch $s$ receives the first packet of a new flow whose destination switch $t$ is not $s$, it determines a set $V$ of neighbors, such that for any $v \in V$, $MCD(\vec{X}_v, \vec{X}_t) < MCD(\vec{X}_s, \vec{X}_t)$.
Then $s$ selects one neighbor $v_0$ in $V$ by hashing the 5-tuple of the packet, i.e., source address, destination address, source port, destination port, and protocol type.
 All packets of this flow will be forwarded to $v_0$, as they have a same hash value.
 Hence, packet reordering is avoided.
This mechanism only applies to the first hop of a packet, and on the remain path the packet is still forwarded by greediest routing.
 The main consideration of such design is to restrict path lengths.
 According to our observation from empirical results, multi-pathing at the first hop already provides good path diversity. The pseudocode of the multi-path routing protocol is presented by Algorithm 2 in the appendix.

S2 multi-path routing is also load-aware. As discussed in \cite{Difs2014}, load-aware routing provides better throughput. We assume a switch maintains a counter to estimate the traffic load on each outgoing link.  At the first hop, the sender can select the links that have low traffic load. Such load-aware selection is flow-based: all packets of a flow will be sent to the same outgoing link as the first packet. 

\subsection{Key-based routing}

%
Key-based routing enables direct data access without knowing the IP address of the server that stores the data.
%
%
%
S2 supports efficient key-based routing based on the principle of consistent hashing.
Only small changes are required to the greediest routing protocol.
%

Let $K_a$ be the key of a piece of data $a$.
In S2, $a$ should be stored in $d$ multiple copies at different servers.
%
In S2 key-based routing, a set of globally known hash functions $H_1,H_2,...,H_d$ can be applied to $K_a$. We use $H(K_a)$ to represent a hash value for $K_a$ mapped in $[0, 1]$. The routable address of $K_a$ is defined as $\langle H_1(K_a),H_2(K_a),...,H_d(K_a)\rangle$.
%
For each space $r$, $1\leq r \leq d$, the switch $s$ whose coordinate $x_{s, r}$ is closest
\footnote{Ties should be broken here. One possible approach is to select the switch with larger coordiante.}
to $H_r(K_a)$ among all switches is called the \emph{home switch} of $K_a$ \emph{in space $r$}.
 $K_a$ has at most $d$ home switches in total.
%
%
 A replica of $a$ is assigned to one of the servers connected to the home switch $s$.
In fact, if greediest routing in the first $d$ spaces cannot make progress on switch $s$, then $s$ is a home switch of $K_a$.
S2 supports key-based routing by executing greediest routing to coordiante $\langle H_1(K_a),H_2(K_a),...,H_d(K_a)\rangle$ in the first $d$ spaces. 


\begin{figure}[t]
\centering
\includegraphics[width=0.85\linewidth]{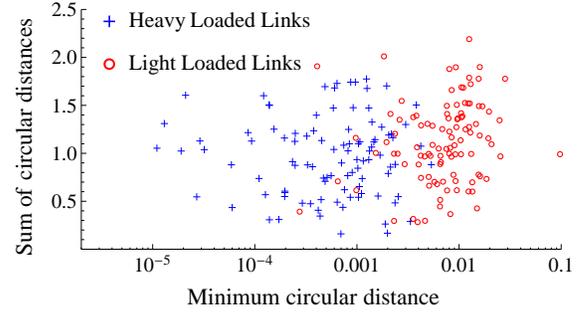}
\vspace{-1ex}
\caption{\small Heavy and light loaded links. $x$-axis: MCD of a link's two endpoints; $y$-axis: sum of CDs of link's two endpoints in all spaces.}
\label{fig:BusyFreeCoor}
\vspace{-2ex}
\end{figure}

\begin{figure}[t]
\centering
\includegraphics[width=0.85\linewidth]{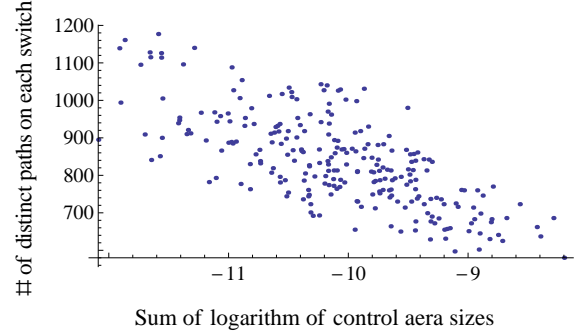}
\vspace{-1ex}
\caption{\small Switch traffic load affected by control area sizes}
\label{fig:ControlAera_paths}
\vspace{-4ex}
\end{figure}

\begin{figure*}[ht]
\centering
\begin{tabular}{p{160pt}p{160pt}p{160pt}}
\includegraphics[width=1\linewidth]{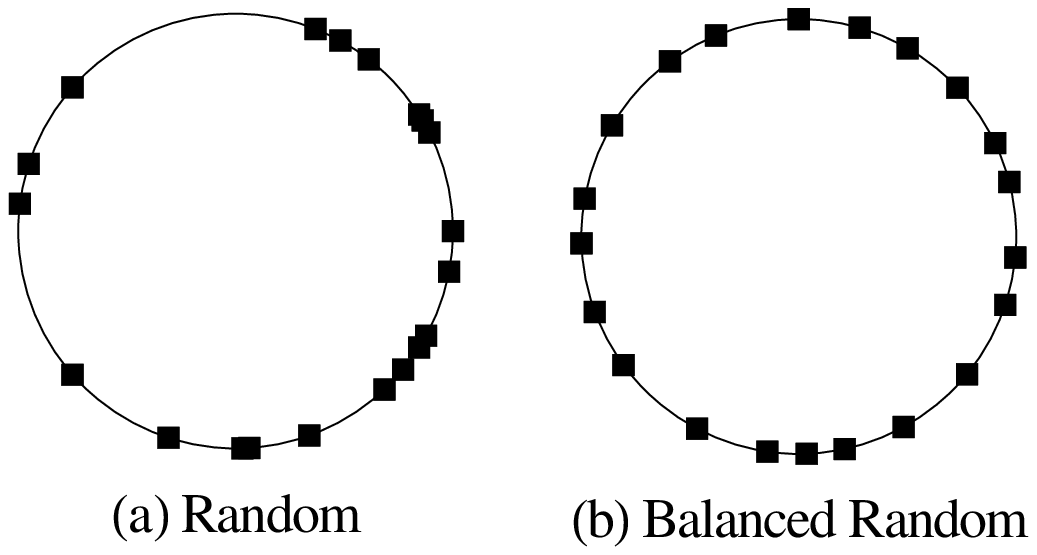}
\vspace{-3ex}
\caption{\small Examples of random and balanced random coordinates}
\label{fig:compareCoordinateGenerate}
&
\includegraphics[width=1\linewidth]{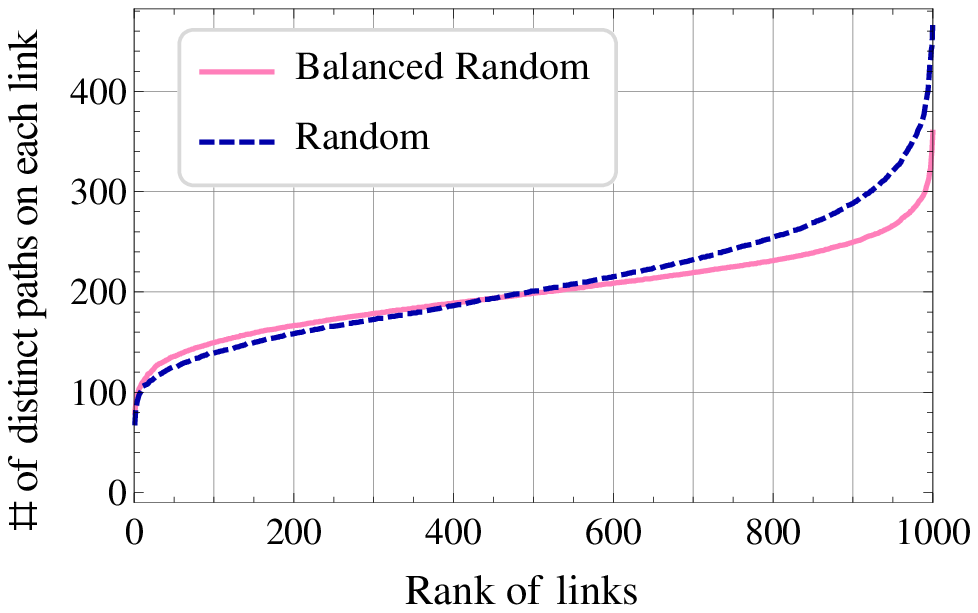}
\vspace{-3ex}
\caption{\small  Distribution of the number of paths on a link}
\label{fig:coordinate_distribute}
&
\includegraphics[width=1\linewidth]{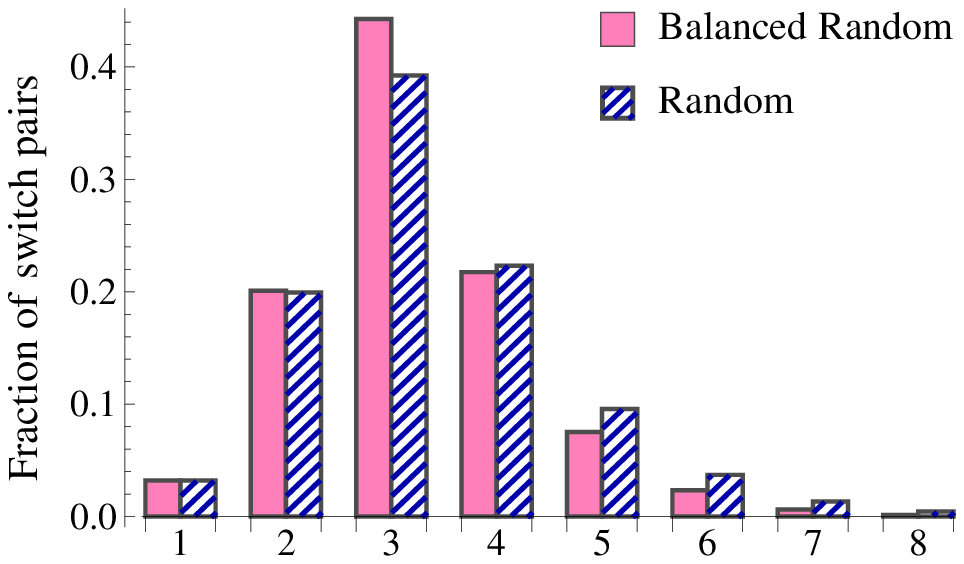}
\vspace{-3ex}
\caption{\small Routing Path Length of two Coordinate Generating Algorithms}
\label{fig:coordinate_distribute_pathlen}
\end{tabular}
\vspace{-5ex}

\end{figure*}

\subsection {Balanced random coordinates}
\label{sec:balance}
Purely uniform random generation of S2 coordinates will probably result in an imbalanced coordinate distribution.
%
%
Figure \ref{fig:compareCoordinateGenerate}(b) shows an example coordinate distribution of 20 switches in a space.
The right half of this ring has much more switches than the left half. Some switches are close to their neighbours while some are not.
Theoretically, among $n$ uniform-randomly generated coordinates, the expected value of the minimum distance between two adjacent coordinates is $\frac{1}{n^2}$, while the expected value of the maximum is $\Theta(\frac{\log n}{n})$ \cite{orderstatics}.
Imbalance of coordinate distribution is harmful to S2 routing in two main aspects. First, greediest routing may intend to choose some links and cause congestion on them. We conjecture as follows. Consider two connected switches $A$ and $B$ whose coordinates are extremely close in one space.  If one of them, say $A$, is the destination of a group of flows, other switches may intend to send the flows  to $B$ if they are unaware of $A$. These flows will then be sent from $B$ to $A$ and congest the link. Second, imbalanced key-value store occurs if switches are not evenly distributed on a ring. Previous work about load balancing in ring spaces cannot be applied here because they do not consider greediest routing.

We perform empirical study of the impact of coordinate distribution to routing loads. In a typical S2 network with $250$ switches and $L=4$, we run greediest routing for all pairs of switches to generate routing paths and then count the number of distinct paths on each link. We find the top 10\% links and bottom 10\% links according to the numbers of distinct paths and denote them by heavy loaded links and light loaded links respectively.
We plot the heavy and light loaded links in a 2D domain as shown in Figure \ref{fig:BusyFreeCoor}, where the $x$-axis is the MCD of a link's two endpoints  and the $y$-axis is the sum of circular distances of a link's two endpoints in all spaces. We find that the frequency of heavy/light loaded links strongly depends on the MCD of two endpoints, but has little relation to the sum of circular distances. If the MCD is shorter, a link is more likely to be heavy loaded. Hence it is desired to avoid two switches that are placed very closely on a ring, trying to enlarge the minimal circular distance for links.

We further study the the impact of coordinate distribution to per-switch loads.
We define the \emph{control area} of switch $s$ in a space as follows:
Suppose switch $s$'s coordinate in this space is $x$, $s$ has two adjacent switches, whose coordinates are $y$ and $z$ respectively.
The control area of $s$ on the ring is the arc between the mid-point of $ \wideparen{y,x} $ and the mid-point of $\wideparen{x,z}$.
The \emph{size of $s$'s control area} in the space is defined as $\frac{1}{2}CD(x,y)+\frac{1}{2}CD(x,z)$.
For the same network as Figure \ref{fig:BusyFreeCoor}, we count the number of different routing paths on each switch.
We then plot this number versus the  sum of logarithm of control area sizes of each switch in Figure \ref{fig:ControlAera_paths}.
It shows that they are negatively related with a correlation coefficient $-0.7179$. Since the sum of control area sizes of all switches is fixed , we should make the control areas as even as possible to maximize the sum-log values. This is also consistent to the load-balancing requirement of key-value storage.
Based on the above observations, we present a \textsc{balanced random coordinate generation} algorithm: 
When a switch $s$ joins the network with $n$ switches, in every space we select two adjacent switches with the maximum circular distance, whose coordinates are $y$ and $z$.
By the pigeonhole principle, $CD(y, z) \geq \frac{1}{n}$.
Then we place $s$ in somewhere between $y$ and $z$.
To avoid being too close to either of $y$ and $z$, we generate $s$'s coordinate $x$ in the space as a random number inside $(y+\frac{1}{3n}, z-\frac{1}{3n})$, so that $CD(x,y) \geq \frac{1}{3n}$ and $CD(x,z) \geq \frac{1}{3n}$.
This algorithm can be used for either incremental or deploy-as-a-whole construction. 
It is guaranteed that the MCD between any pair of switches is no less than $\frac{1}{3n}$. An example of balanced random coordinates is shown in Figure \ref{fig:compareCoordinateGenerate}. The pseudocode is presented by Algorithm 3 in the appendix.

For 10-port 250-switch networks, we calculate the greediest routing path for every pair of switches. We show a typical distribution of routing load (measured by the number of distinct routing paths) on each link in Figure \ref{fig:coordinate_distribute}, where we rank the links in increasing order of load. Compared with purely random coordinates, balanced random coordinates increase the load on under-utilized links (before rank 300) and evidently decrease the load on over-utilized links (after rank 600). About 8\% links of  purely random coordinates have more than 300 paths on each of them, and only 1\%  links of balanced random coordinates have that number. The maximum number of distinct paths that a link is on also decreased from $470 $ to $350$ using balanced random coordinates. Balanced random coordinates provide better fairness among links, and thus improve the network throughput.

Besides link fairness, we also examine the routing path lengths using balanced random coordinates. Fig \ref{fig:coordinate_distribute_pathlen} shows the distribution of switch-to-switch routing path lengths of the same network discussed above. Balanced random coordinates also slightly reduce the routing path lengths.
The average routing path length is decreased from $3.35$ to $3.20$.

\begin{figure*}[ht!]
\centering
\begin{tabular}{p{192pt}p{144pt}p{144pt}}
\centering
\includegraphics[width=0.85\linewidth]{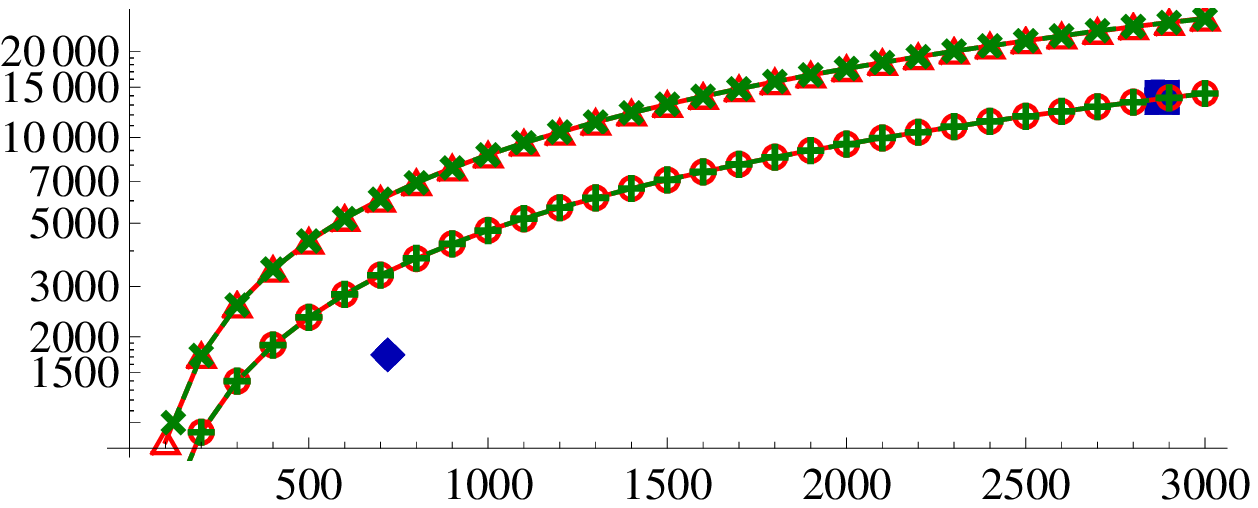}
\includegraphics[width=0.85\linewidth]{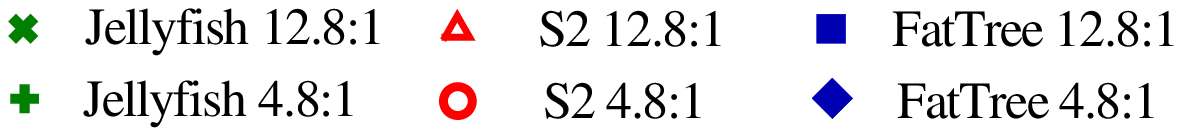}
\vspace{-1ex}
\caption{Bisection bandwidth of S2, FatTree, and Jellyfish. The ratio
(12.8:1 and 4.8:1) is the sever-to-switch ratio.}
\label{fig:bisec}
&
\vspace{-15ex}
\includegraphics[width=1\linewidth] {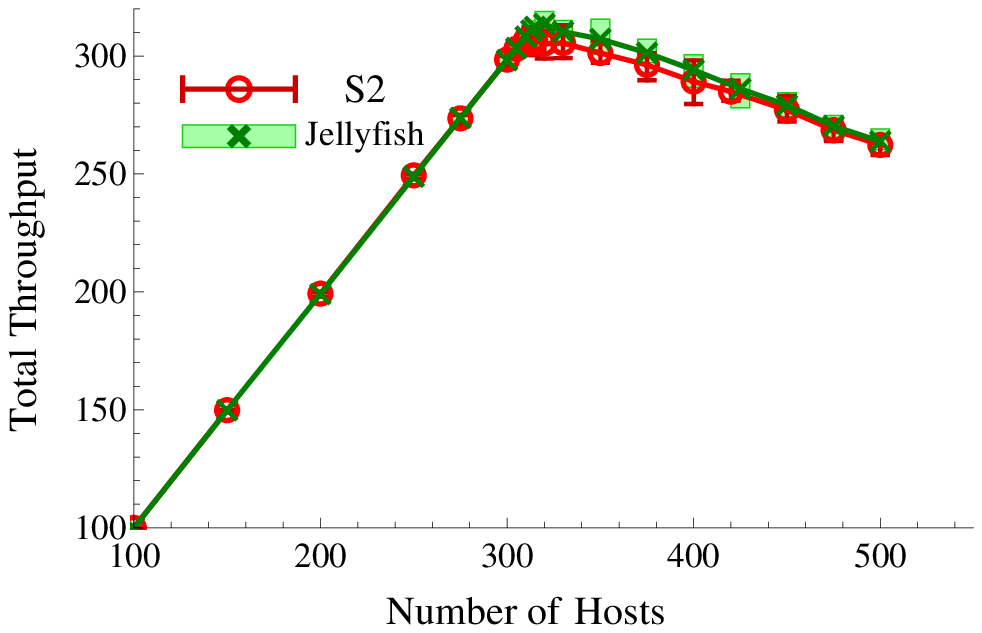}
\vspace{-4ex}
\caption{Ideal throughput of S2 and Jellyfish for a 125-switch network}
\label{fig:idealvsJF}
&
\vspace{-15ex}
\includegraphics[width=1\linewidth] {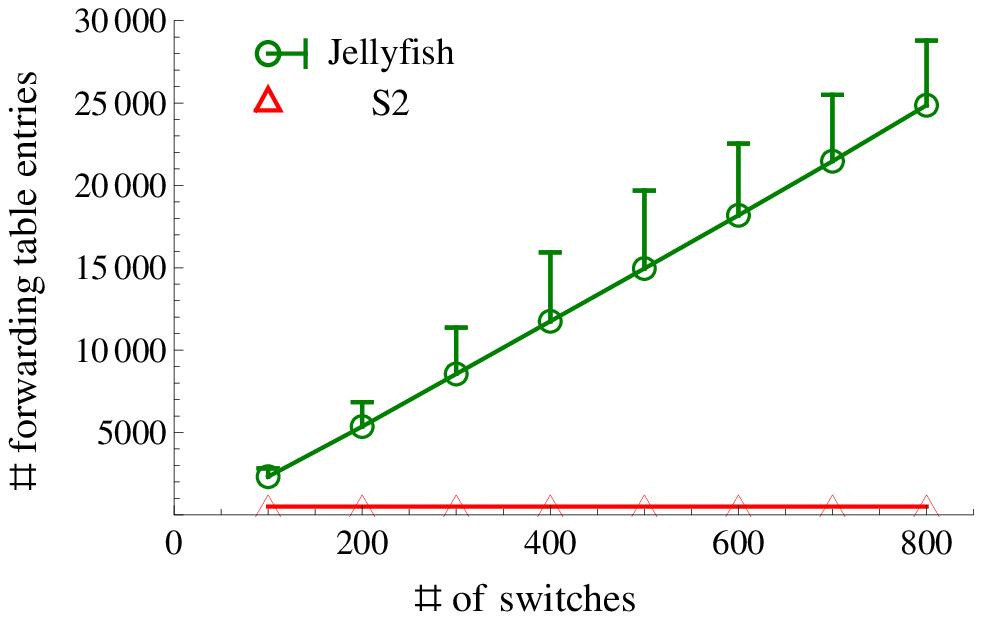}
\vspace{-4ex}
\caption{Forwarding state of S2 and Jellyfish}
\label{fig:RoutingTable}
\end{tabular}
\vspace{-7ex}
\end{figure*}

\section{Evaluation}
\label{sec:evaluation}
In this section, we conduct extensive experiments to evaluate the efficiency, scalability, fairness, and reliability of S2 topologies and routing protocols.
We compare S2 with two recently proposed data center networks, namely Small-World data center (SWDC) \cite{SWDC} and Jellyfish \cite{Jellyfish}.

\subsection{Methodology}
Most existing studies use custom-built simulators to evaluate data center networks at large scale \cite{Hedera} \cite{Scafida} \cite{LEGUP} \cite{SWDC} \cite{REWRITE} \cite{Jellyfish} \cite{Godfrey14}.
We find many of them use a certain level of abstraction for TCP, which may result in inaccurate throughput results.
We develop our own simulator\footnote{We experienced very slow speed when using NS2 for data center networks. We guess the existing studies do not use NS2 due to the same reason. } to perform fine-grained packet-level event-based simulation.
%
TCP New Reno is implemented in detail as the transportation layer protocol.
We simulate all packets in the network including ACKs, which are also routed by greedy routing.
Our switch abstraction maintains finite shared buffers and forwarding tables.

We evaluate the following performance criteria of S2.

\textbf{Bisection bandwidth} describes the network capacity by measuring the bandwidth between two equal-sized part of a network.
%
we calculate the empirical minimum bisection bandwidth by randomly splitting the servers in the network into two partitions and compute the \emph{maximum flow} between the two parts.
%
The minimum bisection bandwidth value of a topology is computed from 50 random partitions. Each value shown in figures is the average of 20 different topologies.

\textbf{Ideal throughput} characterizes a network's raw capacity with perfect load balancing and routing (which do not exist in reality). A flow can be split into infinite subflows which are sent to links without congestion. Routing paths are not specified and flows can take any path between the source and destination. We model it as a  \emph{maximum multi-commodity network flow} problem and solve it using the IBM CPLEX optimizer \cite{CPLEX}. The throughput results are calculated using  a specific type of network traffic, called the \textit{random permutation traffic} used by many other studies \cite{Hedera} \cite{Jellyfish} \cite{Godfrey14}. 
Random permutation traffic model generates very little local traffic and is considered easy to cause network congestion \cite{Hedera}.

\textbf{Practical throughput} is the measured throughput of random permutation traffic routed by proposed routing protocols on the corresponding data center topology. It reflects how a routing protocol can utilize the topology bandwidth. We compare the throughput of S2 with Jellyfish and SWDC for both single-path and multi-path routing.

\textbf{Scalability}. We evaluate forwarding state on switches to characterize the data plane scalability. We measure the number of forwarding entries for shortest-path based routing. However, greedy routing uses distance comparison which does not rely on forwarding entries. Therefore we measure the number of coordinates stored. The entry-to-coordinate comparison actually gives a disadvantage to S2, because storing a coordinate requires much less memory than storing a forwarding entry.

\textbf{Routing path lengths} are important for data center networks, because they have strong impact to both network latency and throughput. For an S2 network, we calculate the routing path length for every pair of source and destination switches and show the average value. 

\textbf{Fairness}. We evaluate throughput and completion time of different flows.

\textbf{Resiliency to network failures} reflects the reliability of the network topology and routing protocol. We evaluate the routing path length and routing success rate under switch failures.

SWDC allows each node to store 2-hop neighbors.
The default SWDC configuration has 6 inter-switch ports.
For SWDC configurations with more than 6 inter-switch ports, we add random links until all ports are used.
For Jellyfish, we use the same implementation of $k$-shortest path algorithm \cite{kshort, kshortCode} as in \cite{Jellyfish}.

Each result shown by a figure in this section, unless otherwise mentioned, is from at least 20 production runs using different topologies.

\begin{figure*}[t!]
\centering
\begin{tabular}{p{160pt}p{160pt}p{160pt}}
\includegraphics[width=1\linewidth]{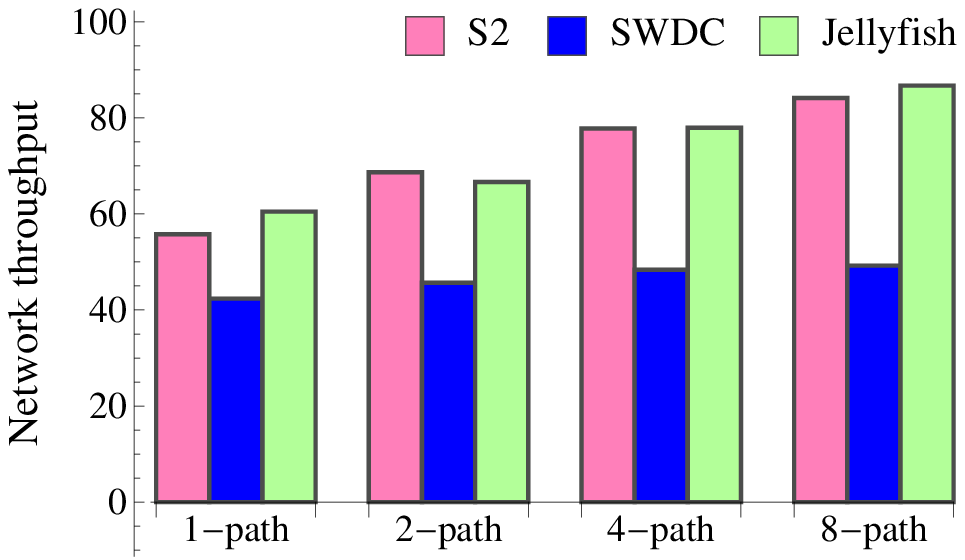}
\caption{\small Throughput of a 250-switch 500-server network}
\label{fig:PkgCompAll}&
\includegraphics[width=1\linewidth]{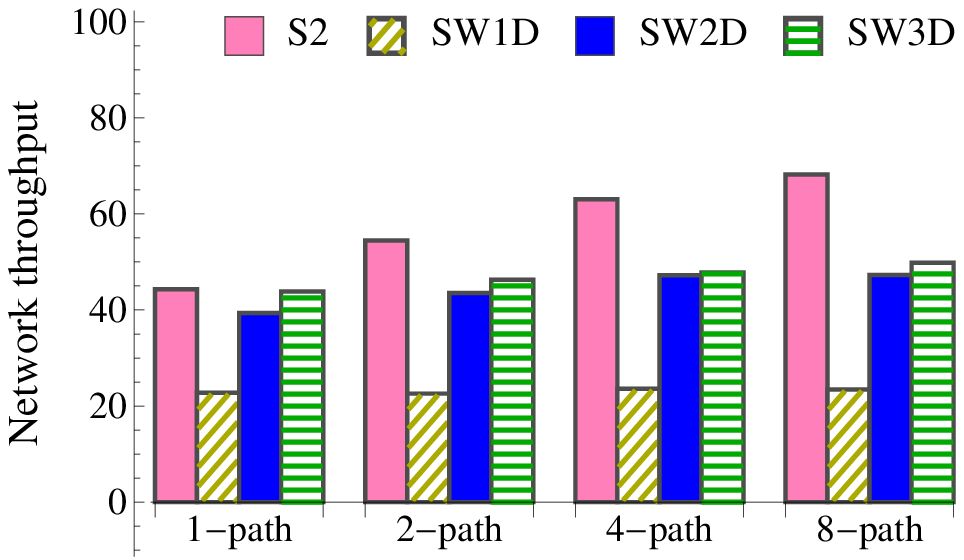}
\caption{\small Throughput of a 400-switch network in SWDC configuration}
\label{fig:THRvsSW400LoadAware}&
\includegraphics[width=1\linewidth]{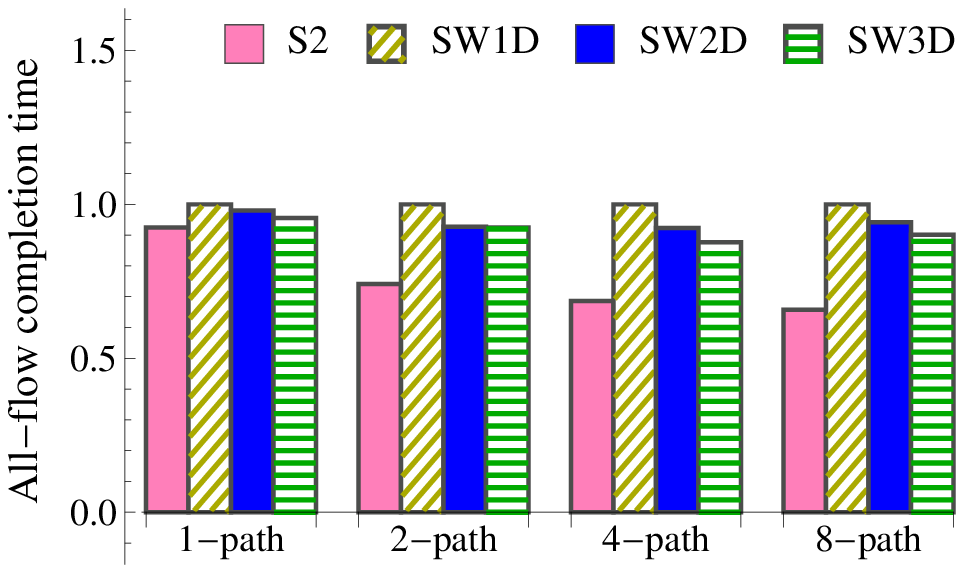}
\caption{\small All-flow completion time}
\label{fig:FCT400LoadAware}
\end{tabular}
\vspace{-7ex}
\end{figure*}

\subsection{Bisection bandwidth}
We compare the minimum bisection bandwidth of S2, Jellyfish, SWDC, and FatTree.
For fair comparison, we use two FatTree networks as benchmarks, a 3456-server 720-switch (24-port) FatTree and a 27648-server 2880 switch (48-port) FatTree.
Note that FatTree can only be built in fixed sizes with specific numbers of ports. The ratio of server number to switch number in above two configurations are 4.8:1 and 12.8:1 respectively.
For experiments of S2 and Jellyfish, we fix the server-to-switch ratio in these two values and vary the number of switches.
In Figure \ref{fig:bisec},
We show the bisection bandwidth of S2, FatTree, and Jellyfish, in the two server-to-switch ratios.
The isolated diamond and square markers represent the minimum bisection bandwidth of FatTree.
%
Both S2 and Jellyfish are free to support arbitrary number of servers and switches. They have identical bisection bandwidth according to our results.
%
%
When using the same number of switches as FatTree (732 and 2880), both S2 and Jellyfish provide substantially higher bisection bandwidth than FatTree.
SWDC only uses a fixed 1:1 server-to-switch ratio and 6-port switches as presented in the SWDC paper \cite{SWDC}.
In such configuration,  S2, SWDC, and Jellyfish have similar bisection bandwidth. However it is not clear whether SWDC can support incremental growth.

\subsection{Ideal throughput}
We model the computation of ideal throughput as a maximum multi-commodity network flow problem: each flow is a commodity without hard demand. We need to find a flow assignment that maximizes network throughput while satisfying capacity constraints on all links and flow conservation. Each flow can be split into an infinite number of subflows and assigned to different paths. We solve it through linear programming using the IBM CPLEX optimizer \cite{CPLEX} and then calculate the maximized network throughput. We show the throughput versus the number of servers of a typical 10-port 125-switch network in Figure \ref{fig:idealvsJF}. When the server number is smaller than 320, the total throughput increases with the server number. After that the network throughput decreases because inter-switch ports are taken to support more servers. S2 is marginally worse than Jellyfish, which has been shown to have clearly higher throughput than FatTree with the same network equipments \cite{Jellyfish}.

\subsection{Scalability}
We consider each coordinate as an entry and compare the number of entries in forwarding tables. In practice, a coordinate requires much less space than a forwarding entry. Even though we give such a disadvantage to S2, S2 still shows huge lead in data plane scalability. Figure \ref{fig:RoutingTable} shows the average and maximum forwarding table sizes of S2 and Jellyfish in networks with 10 inter-switch ports. The number of entries of S2 is no more than 500 and doest not increase when the network grows. The average and maximum forwarding entry numbers of Jellyfish in MPLS implementation \cite{Jellyfish} are much higher. Note the curve of Jellyfish looks like linear but it is in fact $\Theta(N\log N)$. When $N$ is in a relatively small range, the curve of $\Theta(N\log N)$ is close to linear. Using the SWDC configuration, the forwarding state of SWDC 3D is identical to that of S2, and those of SWDC 1D and 2D are smaller.

\begin{figure*}[t!]
\centering
\begin{tabular}{p{160pt}p{160pt}p{160pt}}
\includegraphics[width=1\linewidth]{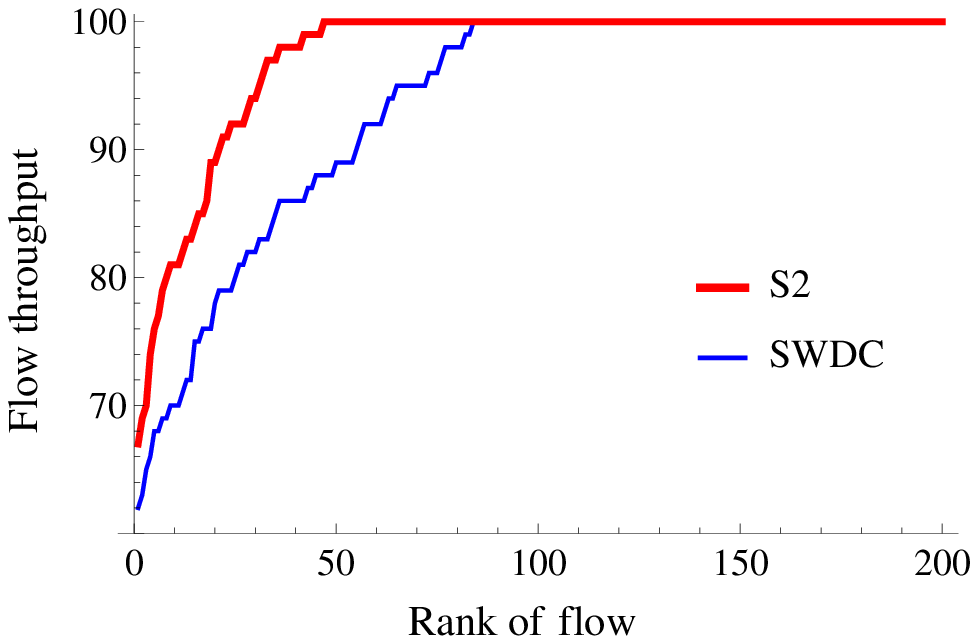}
\vspace{-3ex}
\caption{\small Throughput fairness among flows}
\label{fig:FairnessFlow}&
\includegraphics[width=1\linewidth]{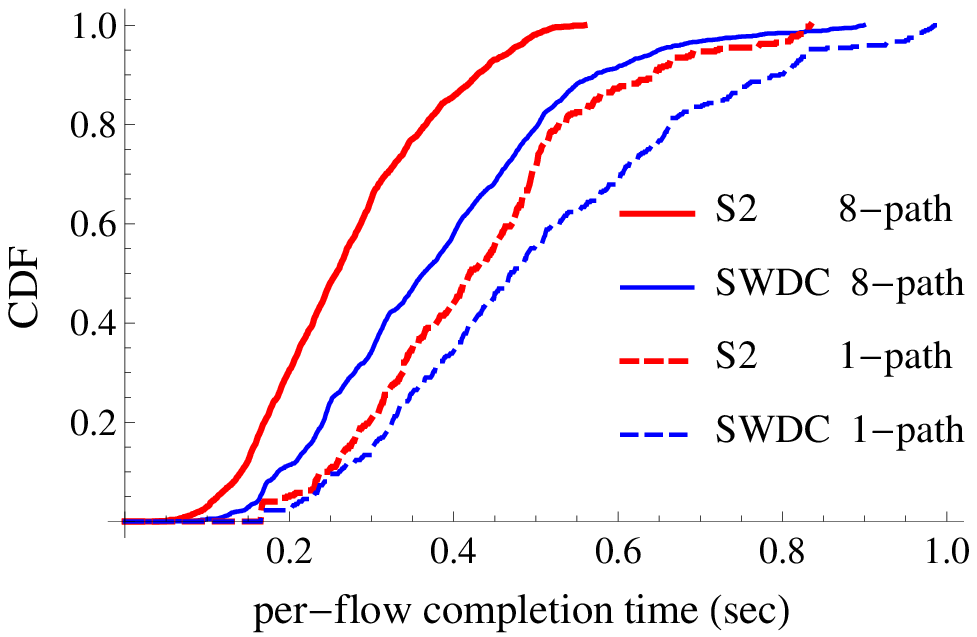}
\vspace{-3ex}
\caption{\small Cumulative distribution of per-flow completion time}
\label{fig:CompletiontimeLoadAware}
&
\includegraphics[width=1\linewidth]{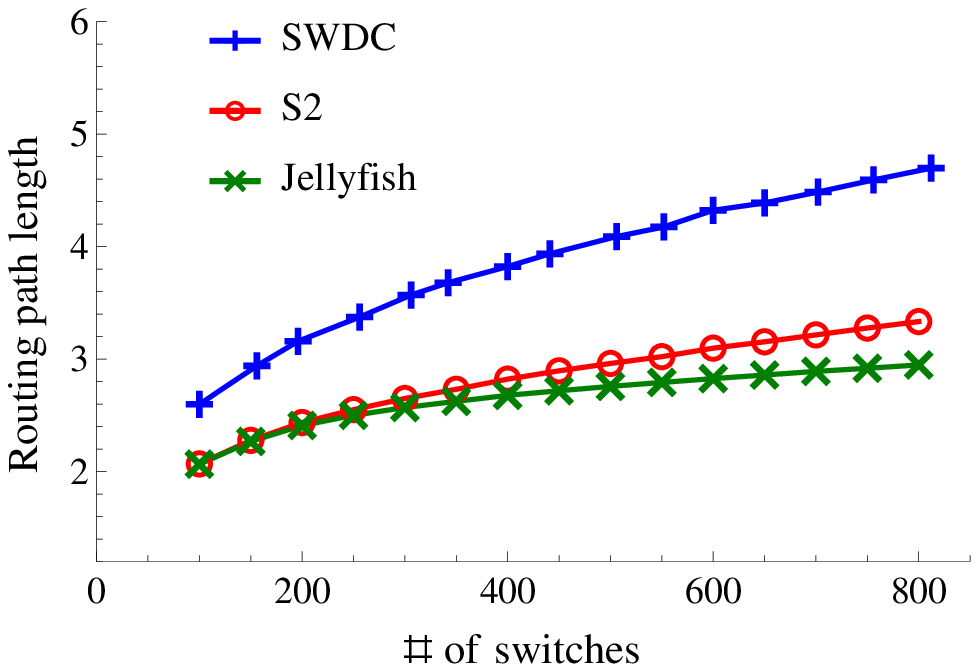}
\vspace{-3ex}
\caption{\small Average routing path length of S2, SWDC, and Jellyfish}
\label{fig:routinglength}
\end{tabular}
\vspace{-8ex}
\end{figure*}

From our experiments on a Dell Minitower with an Intel Core I7-4770 processor and 16GB memory, we also find that it takes hours to compute all pair $8$-shortest paths for Jellyfish with more than 500 switches. Hence it is difficult for switches to compute $k$-shortest paths of a large network in a way similar to link-state routing.


\subsection{Practical throughput}

We conduct experiments to measure the practical throughput of S2, SWDC, and Jellyfish for both single-path and multi-path routing.%
For multi-path routing, the sender splits a flow into $k$ subflows and sends them using S2 multi-path routing.
Packets of the same subflow are forwarded via the same path.
Since the multi-path routing protocol of SWDC is not clearly designed in  \cite{SWDC}, we use a multi-path method similar to that of S2.

In Figure \ref{fig:PkgCompAll} we show the network throughput (normalized to 100) of S2, SWDC, and Jellyfish of a 12-port 250-switch network with 550 servers, using routing with 1, 2, 4, and 8 paths per flow. S2 and Jellyfish have similar network throughput. Using 2-path and 4-path routing, S2 has slightly higher throughput than Jellyfish, while Jellyfish has slightly higher throughput than S2 for 1-path and 8-path.  
Both S2 and Jellyfish overperform SWDC in throughput by about 50\%. We find that multi-path routing improves the throughput of SWDC very little. We conjecture that multi-path greedy routing of SWDC may suffer from shared congestion on some links, since greedy routing paths to a same destination may easily contain shared links in a single space.

In fact, SWDC has three variants (1D Ring, 2D Torus, and 3D Hex Torus) and special configuration (inter-switch port number is 6 and one server per switch). Hence we conduct experiments to compare S2 with all three SWDC networks in the SWDC configuration. Figure \ref{fig:THRvsSW400LoadAware} shows that even under the S2 configuration, S2 provides  higher throughput than all three types of SWDC especially when multi-pathing is used. We only show SWDC 2D in remaining results, as it is a middle course of all three types.

\textbf{Flow completion time:}  We evaluate both all-flow and per-flow completion time of data transmission. Figure \ref{fig:FCT400LoadAware} shows the time to complete transmitting all flows in the same set of experiments as in  Figure \ref{fig:PkgCompAll}. Each flow transmits 16 MB data. S2 takes the least time (0.863 second) to finish all flows. SWDC 2D and 3D also finish all transmissions within 1 second, but use longer time than S2.


\subsection{Fairness among flows}

We demonstrate that S2 provides fairness among flows in the following two aspects.

\textbf{Throughput fairness:} We evaluate the throughput fairness of S2. For the experiments conducted for Figure \ref{fig:PkgCompAll}, we show the distribution of per-flow throughput in Figure \ref{fig:FairnessFlow} where the $x$-axis is the rank of a flow. It shows that S2 provides better fairness than SWDC and more than 75\% of S2 flows can achieve the maximum throughput. Measured by the fairness index proposed by Jain \emph{et al.} \cite{jainindex}, S2 and SWDC 2D have fairness value 0.995741 and 0.989277 respectively, both are very high.

\textbf{Completion time fairness:} We take a representative production run and plot the cumulative distribution of per-flow completion time in Figure \ref{fig:CompletiontimeLoadAware}. We found that S2 using 8-path routing provides both fast completion and fairness among flows -- most flows finish in 0.2 - 0.4 second. 
S2 single-path completes flows more slowly, but still similar to SWDC 8-path routing. Clearly, SWDC single-path has the worst performance in completion time as well as fairness among flows. Jellyfish has similar results as S2, which is not plotted to make the figures clear.

\subsection{Routing Path Length}
Figure \ref{fig:routinglength} shows the average routing path length of S2, SWDC, and Jellyfish by varying the number of switches (12-port). We find that the average path length of S2 is clearly shorter than that of SWDC, and very close to that of Jellyfish, which uses shortest path routing.  For 800-switch networks, the 90th percentile value is 8 for SWDC and 6 for S2 and Jellyfish. The 10th percentile values is 2 for all  S2 and Jellyfish networks, and 3 for all SWDC networks with more than 500 switches. We do not plot the 10th and 90th values in the figure because they make the figure too crowded. Results show that greediest routing in multiple spaces produces much smaller path lengths than greedy routing in a single space.

\subsection{Failure Resiliency}
In this set of experiments, we measure the routing performance of S2, SWDC, and Jellyfish, under switch link failures (a switch failure can be modeled as multiple link failures). We show the routing success rate versus the fraction of failed links in Figure \ref{fig:routingFail}. S2 is very reliable under link failures. When 20\% links fail, the routing success rate is higher than  0.85. SWDC and Jellyfish perform clearly worse than S2. When 20\% links fail, the routing success rate of SWDC is 0.70 and that of Jellyfish is 0.59. S2 uses greedy routing in multiple spaces, hence it is less likely to encounter local minimum under link failure compared to SWDC. Jellyfish has the worst resiliency because it uses pre-computed paths. 


\begin{figure}[t]
\centering
\includegraphics[width=0.75\linewidth]{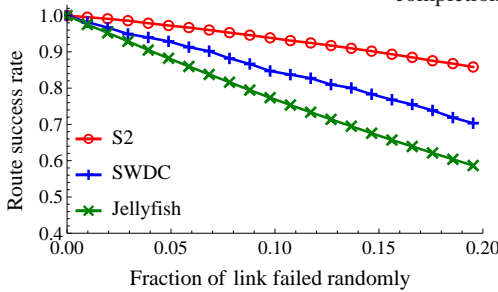}
\vspace{-1ex}
\caption{ Routing success rate versus failure fraction}
\label{fig:routingFail}
\vspace{-5ex}
\end{figure}
\section{Discussion}
\label{sec:discussion}

\subsection{Data center network wiring}
\label{sec:wire}
Labor and wiring expenses consume a significant part of financial budget of building a data center.
S2 can be deployed with cabling optimization to reduce the cost.
In an S2 topology, the majority of cables are inter-switch ones.
Thus we propose to locate the switches physically close to each other so that to reduce cable lengths as well as manual labor.
Compared to FatTree, S2 requires less network switches to obtain a certain bisection bandwidth.
 Therefore the energy consumption, infrastructure and labor cost can be reduced accordingly.


\textbf{Benefits of coordinates:}
It is possible to accommodate the switches of an S2 network inside several standard racks.  These racks can be put close to each other and we suggest to use aggregate cable bundles to connect them.
The coordinates provides a way to reduce inter-rack cables which also helps to arrange the links in order. A virtual space can be divided into several quadrants and we may allocate switches to racks based on corresponding quadrants.
For inner-rack cables, a unique method provided by the nature of coordinates, is using a patch panel that arranges the links in order according to the coordinates. For inter-rack cables, coordinates make it possible to build aggregate bundle wires that are similar to flexible flat cables.

Hamedazimi \emph{et al.} \cite{FireFly2014} proposed to use free-space optical communication in data center networks by putting mirrors and lens on switch racks and the ceiling of data center to reflect beams. Coordinates provide a unique way to locate the switches, and make it able to have these beams neatly ordered.

\subsection{Resiliency to network dynamics}
Shortest path based approaches employ either distributed protocols (e.g., OSPF) or SDN  to accommodate to network dynamics and re-compute shortest paths, which takes time and control traffic to converge.
On the other hand, S2 is more robust to network dynamics as shown in Figure \ref{fig:routingFail} because switches make routing decisions locally and do not need to  re-install forwarding entries.

%
\subsection{Direct server connection}
Although S2 is proposed to interconnect ToR switches, we may also use the S2 topology to connect servers directly and forward packets use S2 routing protocols. Similar approaches are also discussed in CamCube \cite{Symbiotic} and SWDC \cite{SWDC}. There are mainly two key advantages to use this topology. First, greedy routing on a server-centric topology can effectively implement custom routing protocols to satisfy different application-level requirements. This service is called symbiotic routing  \cite{Symbiotic}. Second,  hardware acceleration such as GPUs and NetFPGA can be used for packet switching to improve routing latency and bandwidth \cite{SWDC}. 

\subsection{Switch heterogeneity} \label{sec:heterogeneity}
S2 can be constructed with switches of different port numbers.
The multiple ring topology requires each switch should have at least $2L$ inter-switch ports.
According to Figure \ref{fig:PathLengthDiffNumSpace} and other experimental results, five spaces are enough to provide good network performance.
It is reasonable to assume that every switch in the network has at least 10 inter-switch ports.
Switches with less ports may carry fewer servers to maintain the required inter-switch port number.

\subsection{Possible implementation approaches}
We may use open source hardware and software to implement  S2's routing logic such as NetFPGA. S2's routing logic only includes simple arithmetic computation and numerical comparison and hence can be prototyped in low cost. Besides, S2 can also be implemented by software defined networking such as OpenFlow \cite{OpenFlow}.
According to Devoflow \cite{DevoFlow}, OpenFlow forwarding rules can be extended with local routing decisions, which forward flows that do not require vetting by the controller.  Hence the SDN controller can simply specify the greediest routing algorithm in location actions of switches.
Compared to shortest path routing, S2 has two major advantages to improve the SDN scalability.
First, it reduces the communication cost between switches and the controller. Second there is no need to maintain a central controller that responds to all route queries of the network. Instead, multiple independent controllers can be used for a large network, each of which is responsible to switches in a local area. Such load distribution can  effectively mitigate the scalability problem of a central controller \cite{IMC09} \cite{Benson10}.

\section{Conclusion}
\label{sec:conclusion}

The key technical novelty of this paper is in proposing a novel data center network architecture that achieves all of the three key properties: high-bandwidth, flexibility, and routing scalability. The significance of this paper in terms of impact lies in that greediest routing of S2 is the first greedy routing protocol to enable high-throughput multi-path routing.  We conduct extensive experiments to compare S2 with two recently proposed data center networks, SWDC and Jellyfish. Our results show that S2 achieves the best of both worlds. Compared to SWDC, S2 provides shorter routing paths and higher throughput.  Compared to Jellyfish, S2 demonstrates significant lead in scalability while provides likewise high throughput and bisectional bandwidth. We expect greedy routing using multiple spaces may also be applied to other large-scale network environments due to its scalability and efficiency.


{\small
\bibliographystyle{abbrv}
 \bibliography{bibfile}}

\begin{appendix}

\textbf{Proof of Lemma \ref{lemma1}.}
\begin{proof} \hspace{20 mm}\\
\textbf{(1)} Let $p$ be the switch closest to $x$ among all switches in the space.\\
\textbf{(2)} The ring of this space is divided by $x_s$ and $x$ into two arcs.%
 At least one this two has length no greater than $\frac{1}{2}$. Let it be $\wideparen{x_s, x}$ with length $L(\wideparen{x_s, x})$. We have $CD(x_s, x) = L(\wideparen{x_s, x}) \leq \frac{1}{2}$.\\
\textbf{(3)} If $p$ is on $\wideparen{x_s, x}$, let the arc between $s$ and $p$ on $\wideparen{x_s, x}$ be $\wideparen{x_s, x_p}$.\\
\textbf{(3.1)} If $s$ has an adjacent switch $q$ whose coordinate is on $\wideparen{x_s, x_p}$, then $L(\wideparen{x_q, x}) < L(\wideparen{x_s, x}) \leq \frac{1}{2}$. Hence $CD(q, x)= L(\wideparen{x_q, x}) < L(\wideparen{x_s, x}) = CD(x_s, x)$.\\
\textbf{(3.2)} If $s$ has no adjacent switch on $\wideparen{x_s, x_p}$, $p$ is $x$'s adjacent switch. Hence $s$ has an adjacent switch $p$  such that $CD(x, x_{p})< CD(x, x_s)$.
\textbf{(4)} If $p$ is not on $\wideparen{x_s, x}$, we have an arc $\wideparen{x_s, x, x_p}$. For the arc $\wideparen{x, x_p}$ on $\wideparen{x_s, x, x_p}$, we have $L(\wideparen{x, x_p}) < L(\wideparen{x_s, x}) $. (Assume to the contrary if $L(\wideparen{x, x_p}) \geq L(\wideparen{x_s, x})$. Then we cannot have $CD(x, x_{p})< CD(x, x_s)$. There is contradiction.)\\
\textbf{(4.1)} If $s$ has an adjacent switch $q$ whose coordinate is on $\wideparen{x_s, x, x_p}$, then $L(\wideparen{x_q, x}) < L(\wideparen{x_s, x}) \leq \frac{1}{2}$. Hence $CD(q, x)= L(\wideparen{x_q, x}) < L(\wideparen{x_s, x}) = CD(x_s, x)$.\\
\textbf{(4.2)} If $s$ has no adjacent switch on $\wideparen{x_s, x, x_p}$, $p$ is $x$'s adjacent switch. Hence $s$ has an adjacent switch $p$  such that $CD(x, x_{p})< CD(x, x_s)$.\\
\textbf{(5)} Combining (3) and (4), $s$ always has an adjacent switch $s'$ such that $CD(x, x_{s'})< CD(x, x_s)$.\\
\end{proof}

\textbf{Proof of Lemma \ref{lemma2}.}
\begin{proof} \hspace{20 mm}\\
\textbf{(1)} Suppose the minimum circular distance between $s$ and $t$ is defined by their circular distance in the $j$th space, i.e. $CD(x_{tj}, x_{sj})= MCD_L(\vec{X}_s, \vec{X}_{t})$.\\
\textbf{(2)} In  the $j$th space, $t$ is the switch with the shortest circular distance to $x_{tj}$, which is $CD(x_{tj},x_{tj})=0$. Since $s$ is not $t$,  $s$ is not the switch with the shortest circular distance to $x_{tj}$ ,because any two coordinates are different.\\
\textbf{(3)} Based on Lemma \ref{lemma1}, $s$ has an adjacent switch $s'$ such that \\ $CD(x_{tj}, x_{s'j})< CD(x_{tj}, x_{sj})$.\\
\textbf{(4)} $MCD_L(\vec{X}_{s'}, \vec{X}_{t}) \leq CD(x_{tj}, x_{s'j}) < CD(x_{tj}, x_{sj}) =  MCD_L(\vec{X}_s, \vec{X}_{t})$.\\
\textbf{(5)} Since $v$ is the switch that has the shortest MCD to $\vec{X}_{t}$ among all neighbors of $s$, we have \\
$MCD_L(\vec{X}_{v}, \vec{X}_{t}) \leq MCD_L(\vec{X}_{s'}, \vec{X}_{t})< MCD_L(\vec{X}_s, \vec{X}_{t})$.\\
\end{proof}

\textbf{Proof of Proposition \ref{thm:delivery_greediest}.}
\begin{proof} \hspace{20 mm}\\
\textbf{(1)}Suppose switch $s$ receives a packet whose destination switch is $t$. If $s = t$, the destination host is one of the servers connected to $s$. The packet can be delivered. \\
\textbf{(2)} If $s \neq t$, according to Lemma \ref{lemma2}, $s$ will find a neighbor $v$ such that $MCD_L(\vec{X}_v, \vec{X}_{t}) < MCD_L(\vec{X}_s, \vec{X}_{t})$, and forward the packet to $v$. \\
\textbf{(3)} The MCD from the current switch to the destination coordinates strictly reduces at each hop. Greediest routing keeps making progress. Therefore, there is no routing loop. Since the number of switches is finite, the packet will be delivered to $t$.\\
\end{proof}

\begin{codebox}
\Procname{\proc{ \textbf{Algorithm 1.} Greediest routing on switch $s$}}
\textbf{input:} Coordinates of all neighbors,\\ \ \ \ \ \ \ \ \  destination addresses $\langle \vec{X}_{t}, ID \rangle$.  \\
\li \If $\vec{X}_{s} = \vec{X}_{t}$
\li \Then $h \gets$ the server connected to $s$, with identifier $ID$
\li    Forward the packet to $h$
\li  \Return; \End
\li Compute 
 $MCD_L(\vec{X}_v,\vec{X}_t)$ for all $s$'s neighbor switch $v$
\li Find $v_0$ such that $MCD_L(\vec{X}_{v0},\vec{X}_t)$ is the smallest
\li Forward the packet to $v_0$
\End
\end{codebox}

\begin{codebox}
\Procname{\proc{\textbf{Algorithm 2.} Multi-path routing on switch $s$}}
\textbf{input:} Coordinates of all neighbors,,\\ \ \ \ \ \ \ \ \ destination addresses $\langle \vec{X}_{t}, ID \rangle$  \\
\li \If $\vec{X}_{s} = \vec{X}_{t}$
\li \Then $h \gets$ the server connected to $s$, with identifier $ID$;
\li    Forward the packet to $h$;
\li  \Return; \End
\li \If the packet is not from a server connected to $s$
\li \Then Perform greediest routing;
\li \Return \End
\li $V \gets \emptyset$;
\li \For each neighbor $v$ of $s$
\li \ \ \ \ \ \ \ \textbf{if} $MCD_L(\vec{X}_v,\vec{X}_t) < MCD_L(\vec{X}_s, \vec{X}_t)$ \textbf{then} $V \gets V \cup \{v\} $
\li Select $v_0$ from $V$ by hashing the source and destination\\ addresses and ports;
\li Forward the packet to $v_0$. \End
\end{codebox}

\begin{codebox}
\Procname{$\proc{\textbf{Algorithm 3.} Balanced random coordinate generation}$}
\textbf{input:} Current $n$ coordiantes $x_1, x_2, ..., x_n$ in a circular space  \\
\textbf{output:} One new coordiante $x_{new}$ \\
\li \textbf{if} $n=0$   \textbf{then} \Return $RandomNumber(0,1)$
\li \If $n=1$
\li \Then $a \gets x_1, b \gets x_1+1$ \End  
\li \Else find $x_{r1},x_{r2}$ among $x_1,x_2, ..., x_n$ such that \\
\ \ \ \ \ \ \ \ \  $x_{r1}<x_{r2}$ and $CD(x_{r1},x_{r2})$ is the smallest.
\li \If $x_{r2}-x_{r1} < \frac{1}{2} $
\li \Then $a \gets x_{r1}, b \gets x_{r2} $ \End
\li \Else $a \gets x_{r2}, b \gets x_{r1}+1 $\End \End
\li $t \gets RandomNumber(a+\frac{1}{3n},b-\frac{1}{3n})$
\li \textbf{if} $t>1$ \textbf{then} $t \gets t - 1 $
\li \Return $t$
\end{codebox}

\end{appendix}


\end{document}